\theoremstyle{plain}
\newtheorem{theorem}{Theorem}[]
\newtheorem{lemma}[theorem]{Lemma}
\newtheorem{corollary}[theorem]{Corollary}
\theoremstyle{definition}
\newtheorem{definition}[theorem]{Definition}
\theoremstyle{remark}
\newtheorem{remark}[theorem]{Remark}
\newcommand{\cP}{\ensuremath{\mathcal{P}}\xspace}
\renewcommand{\phi}{\varphi}
\newcommand{\sm}{\setminus}
\newcommand{\sett}[2]{\left\{#1\mathrel{\left|
	\vphantom{#1}\vphantom{#2}\right.}#2\right\}}
\newcommand{\bsett}[2]{\Bigl\{#1\mathrel{\Bigl|\Bigr.}#2\Bigr\}}
\newcommand{\set}[1]{\left\{\mathinner{#1}\right\}}
\newcommand{\abs}[1]{\left|\mathinner{#1}\right|}
\newcommand{\N}{\mathbb{N}}
\newcommand{\xra}[1]{\xrightarrow{#1}}
\newcommand{\xras}[1]{\xrightarrow{#1}\!\!{}^*\,}
\newcommand{\e}{\lambda}
\newcommand{\ew}{\e}
\newcommand\tsid{\trt_{k}}
\newcommand\tsidp[1]{\trt_{#1}}
\newcommand\dsid{\ensuremath{\delta}\xspace}
\newcommand{\trt}{\mathbf{t}}   % transducer t
\newcommand{\etr}{\trt^{\mathrm{e}}}
\newcommand{\epath}{P^{\mathrm{e}}}
\newcommand{\etsid}{\tsid^{\mathrm{e}}}
\newcommand{\cpt}{\cP_{\trt}}   % code property defined by t
\newcommand\al{\Sigma}        %alphabet
\newcommand\alG{\Gamma}        %other alphabet
\newcommand\eew{(\e/\e)}        %empty edit string
\newcommand\ealph{E_\al}   %edit alphabet
\newcommand\aut{\mathbf{a}}   %arbitrary automaton
\newcommand\autb{\mathbf{b}}   % automaton b
\newcommand\tr{\mathbf{t}}    %arbitrary transducer
\newcommand\sz[1]{|#1|}       %size of object
\newcommand\weight[1]{\mathrm{weight}(#1)}       %weight of edit string
\newcommand\ch{\gamma}        %arbitrary channel
\newcommand\chsid{\mathrm{sid}} %sid channel
\newcommand\dist{\dsid}               %symbol for any distance
\newcommand\inp{\mathrm{inp}}
\newcommand\out{\mathrm{out}}
\newcommand\db{B}             %distance bound
\newcommand\dbold{D}   %distance bound old
\newcommand\pssi{\par\smallskip\indent}
\newcommand\pmsi{\par\medskip\indent}
\newcommand\pssn{\par\smallskip\noindent}
\newcommand\pmsn{\par\medskip\noindent}
\newcommand\pbsn{\par\bigskip\noindent}
\newcommand\pnsi{\par\indent}
\begin{document}

\begin{center}
\textbf{\Large An efficient algorithm for computing the edit distance of a regular language via input-altering transducers}
\end{center}

{\large Lila Kari$^{1}$, Stavros Konstantinidis$^{2}$,
Steffen Kopecki$^{1,2}$, Meng Yang$^{2}$}
\pmsn
$^{1}$ Department of Computer Science, University of Western Ontario, London, Ontario, Canada,
\texttt{lila@csd.uwo.ca, steffen@csd.uwo.ca}
\pssn
$^{2}$ Department of Mathematics and Computing Science,
Saint Mary's University, Halifax, Nova Scotia, Canada,
\texttt{s.konstantinidis@smu.ca, meyang.mike@gmail.com}

\pbsn
\textbf{Abstract.}
We revisit the problem of computing the edit distance
of a regular language given via an NFA. This problem
relates to the inherent maximal error-detecting capability of the language in question. We present an efficient algorithm
for solving this problem which executes in time $O(r^2n^2d)$, where $r$ is the cardinality of
the alphabet involved, $n$ is
the number of transitions in the given NFA, and $d$ is the
computed edit distance. We have implemented the algorithm
and present here performance tests. The correctness of the
algorithm is based on the result (also presented here) that
the particular error-detection property related to our problem can be defined via an input-altering transducer.

\pbsn
\textbf{Keywords}.
algorithms, automata, complexity, edit distance, implementation, transducers, regular language

\section{Introduction}\label{sec:intro}
The edit distance of a language $L$ with at least two
words---also referred to as
inner edit distance of $L$---is the minimum edit distance
between any two different words in $L$.
In \cite{Kon:2007}, the author considers the problem of
computing the edit distance of a regular language, which
is given via a nondeterministic finite automaton (NFA),
or a deterministic finite automaton (DFA).
For a given automaton $\aut$ with
$n$ transitions and an alphabet of $r$ symbols, the algorithm proposed in \cite{Kon:2007} has
worst-case time complexity
\begin{equation}\label{eq:cxty}
O(r^2n^2q^2(q+r)),
\end{equation}
where, in fact, $q$ is either the number of states in $\aut$ (if $\aut$ is a DFA), or the square of the number of states in
$\aut$ (if $\aut$ is an NFA).
%\par
If the size of the alphabet is ignored and the automaton in question has only states that can be reached from the start state, then the number of states
is $O(n)$ and the worst-case time complexity shown in~(\ref{eq:cxty})
can be written as
\begin{equation}\label{eq:cxty2}
O(n^5) \hbox{ for DFAs},\> \hbox {and }\> O(n^8) \hbox{ for NFAs}.
\end{equation}
\par
In this paper, motivated by the question of whether certain
error-detection properties can be defined via input-altering
transducers, we obtain an efficient algorithm to compute the
edit distance of a regular language given via an NFA with
$n$ transitions---see theorem~\ref{th:final}. The
algorithm, which is called \texttt{DistBestInpAlter}, has worst-case time complexity
\begin{equation}\label{eq:cxty3}
O(n^2d),
\end{equation}
where $d$ is the computed distance, which is a significant improvement over the original algorithm in \cite{Kon:2007}.

We note that an approach of computing the edit distance
problem via the error-detection property is
discussed briefly in\cite{KonSil:2010}. A similar approach
can be used for the edit distance problem via the
error-correction property.
The new algorithm \texttt{DistBestInpAlter}---see theorem~\ref{th:final}---is based on
(a) the new result that the error-detection property related to our problem is definable via an efficient input altering transducer---see theorem~\ref{th:iat:ed}, and (b) the
observation that the preliminary error-detection-based algorithm can be made significantly more efficient by
a nontrivial utilization of the above new result.
For clarity
of presentation we present in detail not only the
new algorithm, but also
the intermediate versions, all of which
have been implemented  in Python using the well maintained library
FAdo for automata \cite{Fado}.
We have also tested all versions experimentally,
and we discuss in this paper the outcomes of the tests showing that, not only in theory, but also in practice algorithm \texttt{DistBestInpAlter} is clearly more performant.

We note that some related problems involving distances between
words and languages can be found in
\cite{Wag:1974,Pigh:2001} (edit distance between a word and a language), and in
\cite{Mohri:2003,KKPWX:2003,BPR:icalp2011,HKS:dlt2012,HKS:ciaa2013} (various distances between languages).
The problem considered here is technically different,
as the desired distance involves
different words within the same language.

The paper is organized as follows.
The next section contains basic notions
on languages, finite-state machines and edit-strings, and
a few preliminary lemmata.
Section~\ref{sec:ed} describes the approach of
computing the desired edit distance via the concepts
of error-detection and -correction.
Section~\ref{sec:iat} first presents the new result that
the error-detection property in question is definable via
an efficient input-altering transducer---see
theorem~\ref{th:iat:ed}---and then, the main result, algorithm \texttt{DistBestInpAlter} in theorem~\ref{th:final}.
Section~\ref{sec:implem} discusses the implementation
and testing of the main algorithm and its intermediate versions. The last section
contains a few concluding remarks and questions
for future research.

%%%%%%%%%%%%%%%%%%%%%%%%%%%%%%%%%%%%%%%%%%%%%%%%%%%%%%%%%
\section{Notation, background and preliminary results}\label{sec:two}
Most of the basic notions presented here can be found
in various texts such as
\cite{Be:1979,Wood:1987,FLhandbookI,Yu:handbook,Sak:2009}.

\subsection{Sets, words, languages, channels}
If $S$ is any set, the expression $|S|$
%and $2^S$ denote, respectively,
denotes the cardinality
% and the power set
of $S$.
When there is no risk of confusion we denote a singleton set $\{u\}$ simply as $u$. For example, $S\cup u$ is the union of $S$ and $\{u\}$.
We use standard basic notation and terminology for alphabets, words and languages---see \cite{MaSa:handbook}, for instance. For example, $\al$ denotes an alphabet, $\al^+$ the set of nonempty words, $\ew$ the empty word, $|w|$ the length of the word $w$.
We use the concepts of (formal) language and concatenation between words, or languages, in the usual way. We say that $w$ is an $L$-\emph{word} if $w\in L$ and $L$ is a language.
%The notation $L_\ew$ is shorthand for $L\cup\ew$.
%The acronym \emph{iff} stands for ``if and only if''.
\par
A binary word \emph{relation} $\rho$ on $\al^*$ is any subset of $\al^*\times\al^*$. The \emph{domain} of $\rho$ is $\{u\mid(u,v)\in\rho \hbox{ for some $v\in\al^*$}\}$.
%The inverse of $\rho$ is the relation $\rho^{-1}=\{(v,u)\mid (u,v)\in\rho\}$.
%
A \emph{channel} $\ch$ is a binary relation on $\al^*$ that is domain-preserving (or input-preserving); that is, $\ch\subseteq\al^*\times\al^*$ and $(w,w)\in\ch$ for all words $w$
in the domain of $\ch$. When  $(u,v)\in\ch$ we say that $u$ can be received as $v$
via the channel $\ch$, or $v$ is a possible output of $\ch$
when $u$ is used as input. If $v\not=u$ then we say that
$u$ can be received with errors (via $\ch$). Here we only
consider the channel $\chsid(k)$, for some $k\in\N$, such that
$(u,v)\in\chsid(k)$ if and only if $v$ can be obtained by
applying at most $k$  errors in $u$, where an error
could be a deletion of a symbol in $u$, a substitution of a symbol in $u$ with
another symbol, or an insertion of a symbol in $u$---see further below
for a more rigorous definition via edit-strings.

\subsection{NFAs and transducers}
A  nondeterministic finite automaton
with empty transitions, \emph{$\ew$-NFA} for short, or
just \emph{automaton}, is a quintuple $\aut=(Q,\al, T,s, F)$ such that $Q$ is the set of states, $\al$ is the alphabet, $s\in Q$
is the start (or initial) state, $F\subseteq Q$ is the set of final states, and $T\subseteq Q\times(\al\cup\ew)\times Q$ is the finite set of transitions. Let $(p,x,q)$ be a transition of $\aut$. Then  $x$ is called the \emph{label} of the transition, and we say that $p$ has an \emph{outgoing} transition (with label $x$).
We also use the notation $$p\xra{x}q$$ for a transition
$(p,x,q)$.
The $\ew$-NFA $\aut$ is called an \emph{NFA}, if no transition label is empty, that is, $T\subseteq Q\times\al\times Q$. A deterministic finite automaton, \emph{DFA} for short, is a special type of NFA where there is no state $p$ having two outgoing transitions with different labels.

\par
A \emph{path} %, or \emph{computation}
of $\aut$ is a finite sequence of transitions of the form
 $$(p_0,x_1,p_1),(p_1,x_2,p_2),\ldots,
 (p_{\ell-1},x_\ell,p_\ell),$$
for some nonnegative integer $\ell$. The word $x_1\cdots x_\ell$ is called the \emph{label} of the path.
We write $p_0\xras{x}p_\ell$ to
indicate that there is a path with label $x$ from $p_0$ to
$p_\ell$.
A path
as above is called \emph{accepting} if
$p_0$ is the start state and $p_\ell$ is a final state. The \emph{language accepted} by $\aut$, denoted as $L(\aut)$, is the set of labels of all the accepting paths of $\aut$. The automaton $\aut$ is called
\emph{trim}, if every state appears in some accepting path of $\aut$.
\par
A (finite) \emph{transducer} \cite{Be:1979,Yu:handbook} is a sextuple $\tr=(Q, \al, \alG, T, s, F)$ such that $Q,s,F$ are exactly the same as those in $\ew$-NFAs, $\al$ is now called the input alphabet, $\alG$ is the output alphabet, and $T\subseteq Q\times\al^*\times\alG^*\times Q$ is the finite set of transitions. We write $(p,u/v,q)$,
or $p\xra{u/v}q$ for a transition---the label here is $(u/v)$, with $u$ being the input and $v$ being the output label. The concepts of path, accepting path, and trim transducer are similar to those in $\ew$-NFAs. However, the label of a
transducer path
$(p_0,x_1/y_1,p_1),\ldots,(p_{\ell-1},x_\ell/y_\ell,p_\ell)$
is the pair $(x_1\cdots x_\ell,
y_1\cdots y_\ell)$ of the two words consisting of the input and output labels in the path, respectively. The \emph{relation realized}
by the transducer $\tr$, denoted as $R(\tr)$, is the set of labels in all the accepting paths of $\tr$.
We write $\tr(x)$ for the set of possible outputs of $\tr$ on input $x$, that is,  $y\in\tr(x)$ if and only if $(x,y)\in R(\tr)$.
The transducer is called functional, if
the relation $R(\tr)$ is a function, that is,
$\tr(x)$ consists of at most one word, for all inputs $x$.
The transducer $\tr$ is said to be in \emph{standard form}, if each transition $(p,u/v,q)$ is such that $u\in(\al\cup\ew)$ and $v\in(\alG\cup\ew)$.
We note that every transducer is effectively equivalent to one (realizing the same relation, that is) in standard form.
\par
If $\mathbf{m}$ is an automaton, or a transducer in standard form, then the \emph{size} of
$\mathbf{m}$, denoted by $\sz{\mathbf{m}}$, is the number of states plus the number
of transitions in $\mathbf{m}$.

\subsection{Edit strings and edit distance.}
The alphabet $\ealph$ of the
\emph{(basic) edit operations}, which depends on the alphabet $\al$
of ordinary symbols, consists of all
symbols $(x/y)$ such that $x,y\in\al\cup\{\e\}$ and at least one
of $x$ and $y$ is in $\al$. If $(x/y)\in\ealph$ and $x$ is
not equal to $y$ then $(x/y)$ is called an \emph{error} \cite{KaKo:2004}.
The edit operations $(a/b)$, $(\e/a)$, $(a/\e)$,
where $a,b\in\al-\{\e\}$ and $a\not=b$,
are called \emph{substitution, insertion, deletion}, respectively.
We write $\eew$ for the empty word over the alphabet $\ealph$. We
note that $\e$ is used as a formal symbol in the elements of
$\ealph$. For example, if $a,b\in\al$ then
$(\e/a)(b/b)\not=(b/a)(\e/b)$.
The elements of $\ealph^*$ are
called \emph{edit strings}.
The {\it weight\/} of an edit string $h$, denoted as $\weight{h}$,  is the number
of errors occurring in $h$. For example, for
\begin{equation}\label{eq:edits}
g = (a/a)(a/\e)(b/b)(b/a)(b/b),
\end{equation}
$\weight{g}=2$.
The \emph{input} and \emph{output} parts of an edit
string $h=(x_1/y_1)\cdots(x_n/y_n)$ are the words (over $\al$)
$x_1\cdots x_n$ and $y_1\cdots y_n$, respectively. We write
$\inp(h)$ for the input part and $\out(h)$ for the output part of
$h$. For example, for the $g$ shown above,
$\inp(g)=aabbb$ and $\out(g)=abab$.
The \emph{inverse of an edit string} $h$ is the edit string
resulting by inverting the order of the input and output
parts in every edit operation in $h$. For example, the inverse
of $g$ shown above is
\[
(a/a)(\e/a)(b/b)(a/b)(b/b).
\]
The channel $\chsid(k)$ can be defined more rigorously via edit
strings:
\[
\chsid(k)=\{(u,v)\mid \> u=\inp(h),\>v=\out(h),\>\hbox{for some $h\in\ealph^*$ with $\weight{h}\le k$}\}.
\]
\pssn
The \emph{edit (or Levenshtein) distance} \cite{Levenshtein:66:en} between two words $u$ and
$v$, denoted by $\dist(u,v)$, is the smallest number of errors
(substitutions, insertions and deletions) that can be used to
transform $u$ to $v$. More formally,
\[
\dist(u,v)=\min\{\weight{h}\mid h\in
\ealph^*,\>\inp(h)=u,\>\out(h)=v\}.
\]
We say that an edit string $h$ \emph{realizes the edit distance
between two words} $u$ and $v$, if $\weight{h}=\dist(u,v)$ and
$\inp(h)=u$ and $\out(h)=v$.
For example, for $\al=\{a,b\}$, we have that
$\dist(ababa,babbb)=3$ and the edit string
\[
h=(a/\e)(b/b)(a/a)(b/b)(a/b)(\e/b)
\]
realizes $\dist(ababa,babbb)$.
Note that several edit strings can realize the distance $\delta(u,v)$.
If $L$ is a language containing at least two words then the edit
distance of $L$ is
\[
\dist(L)=\min\{\dist(u,v)\mid u,v\in L\>\hbox{ and }\>u\not=v\}.
\]
Testing whether a given NFA accepts at least two words is
not a concern in this paper, but we note that this can
be done efficiently (in linear time via a breadth
first search type algorithm) \cite{Yang:2012}.

\begin{definition}
An edit string $h$ of nonzero weight is called \emph{reduced}, if
(a) the first error in $h$ is not an insertion, and
(b) if the first error in $h$ is a deletion of the form $(a/\e)$, then the first non-deletion edit operation $(x/y)$ that follows $(a/\e)$ in $h$ (if any) is such that $y\not=a$.
\end{definition}

\begin{lemma}\label{lem:didist}
Let $x,y,u,v$ be words. The following statements hold true.
\begin{enumerate}
\item
$\dsid(xuy,xvy) = \dsid(u,v)$.
\item
If $v<_p u$ then $\dsid(u,v)=\abs u-\abs v$.
\item
If $u\not=v$, then there is a reduced edit string $h$
realizing $\dsid(u,v)$.
\end{enumerate}
\end{lemma}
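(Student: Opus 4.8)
The three parts build on one another, so I would prove them in the order given, with part~1 doing the real work. For part~1 the bound $\dsid(xuy,xvy)\le\dsid(u,v)$ is the easy direction: if $h$ realizes $\dsid(u,v)$ and $x=x_1\cdots x_m$, $y=y_1\cdots y_n$, then the edit string $(x_1/x_1)\cdots(x_m/x_m)\,h\,(y_1/y_1)\cdots(y_n/y_n)$ has input $xuy$, output $xvy$, and weight $\weight{h}$, since the prepended and appended blocks are errorless. For the reverse inequality I would peel off one common letter at a time, reducing to the two one-letter identities $\dsid(au,av)=\dsid(u,v)$ and $\dsid(ua,va)=\dsid(u,v)$ and then inducting on $\abs{x}+\abs{y}$. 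To prove $\dsid(au,av)=\dsid(u,v)$ I would use the standard recursive description of the distance (a direct consequence of the edit-string definition): an optimal edit string for $(au,av)$ either aligns the two leading $a$'s, at cost $\dsid(u,v)$, or deletes the leading input $a$, at cost $1+\dsid(u,av)$, or inserts a leading output $a$, at cost $1+\dsid(au,v)$. The triangle inequality, which follows by composing alignments, gives $\dsid(u,v)\le 1+\dsid(u,av)$ and $\dsid(u,v)\le 1+\dsid(au,v)$, so neither of the last two options beats the first and the minimum is exactly $\dsid(u,v)$; the suffix identity is symmetric.

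Part~2 then falls out of part~1. Writing $u=vw$, where $w$ is the suffix of $u$ following the prefix $v$ (so $\abs{w}=\abs{u}-\abs{v}$), part~1 with common prefix $v$ and empty common suffix gives $\dsid(u,v)=\dsid(vw,v\ew)=\dsid(w,\ew)$. Finally $\dsid(w,\ew)=\abs{w}$, because any edit string with empty output consists solely of deletions, exactly one for each of the $\abs{w}$ symbols of $w$.

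For part~3 I would start from any edit string $h$ realizing $\dsid(u,v)$ — one exists since the distance is the minimum of a nonempty set of nonnegative integers — and note $\weight{h}\ge 1$ because $u\ne v$. The plan is to normalize $h$ into reduced form through a terminating sequence of local rewrites that change neither the input, nor the output, nor the weight. The basic move is the transposition $(\e/b)(c/\e)\to(c/\e)(\e/b)$ of an insertion immediately followed by a deletion, which pushes deletions leftward while preserving all three quantities. Optimality of $h$ is then used to exclude the patterns forbidden by condition~(b): a deletion $(a/\e)$ whose first following non-deletion operation is an insertion $(\e/a)$ or a substitution $(c/a)$ can be re-routed to a strictly lighter edit string, by matching the deleted $a$ against that later output $a$ and deleting the intervening symbol instead, contradicting $\weight{h}=\dsid(u,v)$; the only remaining way to produce the offending output $a$ is a match $(a/a)$, which I would displace past the deletion by a weight-preserving swap, thereby securing condition~(b).

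I expect the normalization to be the main obstacle, and specifically the treatment of a \emph{leading} insertion demanded by condition~(a). After stripping the longest common prefix $\lcp(u,v)$ via part~1, the remaining words differ at their first symbol, and the delicate point is to show that among the optimal edit strings one can be chosen whose first error is a substitution or a deletion rather than an insertion. This is exactly where the reduced conditions interact most tightly with weight-minimality, so I would spend most of the effort relating the leading insertion to a nearby deletion or substitution through the recursive structure of the distance, and then exhibiting a terminating potential — such as the position of the first error — to guarantee that the rewriting halts.
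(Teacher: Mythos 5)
Your parts 1 and 2 are correct, and in fact more self-contained than the paper's treatment: the paper simply cites Levenshtein for part 1 and proves part 2 by a direct counting argument (for any realizing edit string with $i$ insertions and $d$ deletions, $\abs v=\abs u+i-d$ forces weight $\ge d+i\ge d-i=\abs u-\abs v$), whereas you prove part 1 via the first-symbol recurrence plus the triangle inequality and then derive part 2 from it. Both routes work. Your treatment of condition (b) in part 3 also essentially coincides with the paper's rewriting step, and your use of optimality to exclude the patterns $(a/\e)\cdots(\e/a)$ and $(a/\e)\cdots(c/a)$ is correct (the paper glosses over this point, silently treating only the match case $(a/a)$).

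The genuine gap is in your plan for condition (a). You insist that every rewrite ``change neither the input, nor the output, nor the weight,'' and you then hope to show that some optimal edit string \emph{with input $u$ and output $v$} has a first error that is not an insertion. That is impossible in general. Take $u=a$ and $v=ab$: the only edit string of weight $\dsid(u,v)=1$ with input $a$ and output $ab$ is $(a/a)(\e/b)$, whose first error is an insertion. Stripping the lcp does not save you: for $u=a$ and $v=ba$ (neither word a prefix of the other), the only weight-$1$ edit string from $u$ to $v$ is $(\e/b)(a/a)$, again insertion-first; indeed, by the very recurrence you invoke, the insertion branch costs $1+\dsid(a,a)=1$, strictly less than the substitution branch $1+\dsid(\e,a)=2$ and the deletion branch $1+\dsid(\e,ba)=3$, so \emph{every} optimal string from $u$ to $v$ must begin with the insertion. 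The missing idea is the paper's inversion step: when the first error of an optimal $g_0$ is an insertion, replace $g_0$ by its inverse, which turns that insertion into a deletion; this swaps $\inp$ and $\out$ but preserves the weight, and is legitimate because $\dsid(u,v)=\dsid(v,u)$. Consequently the lemma must be read (and is later used) in an orientation-free sense: the reduced edit string $h$ satisfies $\weight h=\dsid(u,v)$ and $\{\inp(h),\out(h)\}=\{u,v\}$, not necessarily $\inp(h)=u$. This is precisely why lemma~\ref{lem:Adi:props} and the proof of theorem~\ref{th:iat:ed} work with paths labeled $u/v$ \emph{or} $v/u$. Without allowing the swap, the statement you are trying to prove is simply false, so no choice of terminating potential can rescue the rewriting process.
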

\begin{proof}
The first statement already appears in \cite{Levenshtein:66:en}. The second statement
is rather folklore, but we provide a proof here for the sake of completeness.
Let $u=\sigma_1\cdots\sigma_n$ and $v=\sigma_1\cdots\sigma_m$, where $m,n\in\N_0$ and $m<n$ and all $\sigma_i$'s are in $\al$. Then, the edit
string
\[
h=(\sigma_1/\sigma_1)\cdots(\sigma_m/\sigma_m)
(\sigma_{m+1}/\e)\cdots(\sigma_n/\e)
\]
has weight $n-m$ and $\inp(h)=u$ and $\out(h)=v$.
We show that $h$ realizes $\dsid(u,v)$ by proving that,
for any edit string $g$ realizing $\dsid(u,v)$,
$\weight{g}=n-m$. Indeed, first note that
$\weight{g}\le \weight{h}=n-m$. Let $i$ and $d$ be the number
of insertions and deletions in $g$. Then $|v|=|u|+i-d$,
which implies $n-m=d-i$. Now $\weight{g}\ge d+i\ge
d-i= n-m$, as required.
\par
For the third statement, let $g_0$ be any edit string
realizing $\dsid(u,v)$. The following process can be used
to obtain the required reduced edit string $h$.
\begin{enumerate}
  \item If the first error in $g_0$ is a substitution, then $h=g_0$.
  \item If the first error in $g_0$ is an insertion, then  set $g_0$ to the inverse of $g_0$ and continue with
      the next step.
  \item If the first error in $g_0$ is a deletion $(a/\e)$, then $g_0$ is of the form
      \[
      g_0=(e_1\cdots e_r)(a/\e)(a_1/\e)\cdots(a_d/\e)g_0',
      \]
      where the $e_i$'s are non-errors, $d\in\N_0$ and each
      $(a_j/\e)$ is a deletion, and $g_0'$
      does not start with a deletion. If $g_0'$ is empty or starts with an edit operation $(x/y)$ in which $y\not=a$, then the required $h$ is $g_0$. If $g_0'$
      starts with an edit operation $(x/a)$, then it is of the form $g_0'=(x/a)g_1'$, and the
      edit string
      \[
      g_1=(e_1\cdots e_r)(a/a)(a_1/\e)\cdots(a_d/\e)(x/\e)g_1',
      \]
      realizes $\dsid(u,v)$, as  $weight(g_1) = weight(g_0)$. The process now continues from the first step using $g_1$ for $g_0$.
\end{enumerate}
As the edit string $g_0$ is finite, the above process
terminates with a reduced edit string $h$, as required.
\end{proof}

The bound $\dbold_\aut$ in the next lemma comes from
\cite{Kon:2007}. It is always less than or equal to
the number of states in the NFA $\aut$. Moreover, there are NFAs for which this bound is tight---see Fig.~\ref{fig:testA}
in Section~\ref{sec:implem}.

\begin{lemma}
For every NFA $\aut$ accepting at least two words
we have that $$\dist(L(\aut))\le \dbold_\aut,$$
where $\dbold_\aut$ is the number of states in the
longest path in $\aut$ from the start state having no repeated state.
\end{lemma}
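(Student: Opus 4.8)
The plan is to exhibit two distinct $L(\aut)$-words whose edit distance is at most $\dbold_\aut$. I will rely on the elementary fact that $\dist(u,v)\le\max(\abs u,\abs v)$ for all words $u,v$ (align the shorter word against a prefix of the longer one using substitutions, then append the remaining symbols by insertions), together with Lemma~\ref{lem:didist}. Crucially, $\aut$ is an NFA with no $\e$-transitions, so every accepting path of length $m$ spells a word of length exactly $m$, and a simple path (pairwise distinct states) from the start state has at most $\dbold_\aut$ states, hence spells a word of length at most $\dbold_\aut-1$.

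First I would choose $u$ to be a shortest word of $L(\aut)$ and $v$ to be a shortest word of $L(\aut)$ that differs from $u$; both exist because $L(\aut)$ has at least two words. The key consequence of this choice is that every word strictly shorter than $v$ must equal $u$. I then fix an arbitrary accepting path $\rho=(r_0,\dots,r_m)$ for $v$, with $r_0=s$, $r_m\in F$ and $m=\abs v$, and split into two cases according to whether $\rho$ is simple.

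If $\rho$ is simple, then its $m+1$ states are distinct, so $\abs v\le\dbold_\aut-1$, and since $\abs u\le\abs v$ we immediately get $\dist(u,v)\le\max(\abs u,\abs v)=\abs v\le\dbold_\aut$. If $\rho$ is not simple, I would locate the first repeated state: let $j$ be least with $r_j\in\{r_0,\dots,r_{j-1}\}$, say $r_j=r_i$ with $i<j$. The prefix $r_0,\dots,r_{j-1}$ consists of $j$ distinct states forming a simple path from $s$, so $j\le\dbold_\aut$. Deleting the cycle $r_i\to\cdots\to r_j$ yields a strictly shorter accepting path, hence a word $v'\in L(\aut)$ with $\abs{v'}<\abs v$; by the choice of $v$ this forces $v'=u$. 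Writing $x,z,y$ for the input labels of the segments $r_0\!\to\! r_i$, $r_i\!\to\! r_j$, $r_j\!\to\! r_m$, this says $u=xy$ and $v=xzy$ with $\abs z=j-i$. Lemma~\ref{lem:didist}(1) then reduces the distance to $\dist(u,v)=\dist(\e,z)$, and part~(2) (together with symmetry of $\dist$) gives $\dist(\e,z)=\abs z=j-i\le j\le\dbold_\aut$.

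I expect the main obstacle to be the non-simple case: the argument hinges on guaranteeing that removing a single cycle lands exactly on the word $u$, which is precisely why $v$ must be taken as a shortest word different from $u$, and why the absence of $\e$-transitions (ensuring that cycle removal strictly shortens the word) is essential. The remaining care is to extract the cycle at the \emph{first} return, so that the surrounding prefix is a genuine simple path from the start state, allowing the detour length $j-i$ to be matched against the quantity $\dbold_\aut$.
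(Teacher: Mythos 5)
Your proof is correct; the main thing to note is that there is nothing in the paper to compare it against, since the paper states this lemma without proof, importing the bound $\dbold_\aut$ from \cite{Kon:2007}. Your argument is a sound, self-contained derivation: take $u$ a shortest word of $L(\aut)$ and $v$ a shortest word of $L(\aut)$ different from $u$; if the accepting path of $v$ is simple, then $\dist(u,v)\le\max(\abs{u},\abs{v})=\abs{v}\le \dbold_\aut-1$, and otherwise cutting out the cycle at the \emph{first} repeated state yields a strictly shorter accepted word, which by the minimality of $v$ must be $u$ itself, giving $u=xy$, $v=xzy$ and hence $\dist(u,v)=\dist(\e,z)=\abs{z}=j-i\le j\le\dbold_\aut$ by lemma~\ref{lem:didist}. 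You correctly flag and handle the two delicate points: the absence of $\e$-transitions in an NFA (so deleting a cycle strictly shortens the word, and path length equals word length), and the need to cut at the first repetition so that the prefix $r_0,\ldots,r_{j-1}$ is a simple path from the start state and its state count $j$ can be compared with $\dbold_\aut$. As a sanity check, your case analysis is consistent with the paper's remark that the bound is tight for the automata of Fig.~\ref{fig:testA}: there the first cycle closes back at the start state after $n$ transitions, so your bound $j-i\le\dbold_\aut$ is attained with equality.
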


However, the bound $\dbold_\aut$ is of no use in our context,
as the problem of determining the length of a longest path in a given automaton, or a graph in general, is NP-complete since an algorithm solving this problem can be used to decide the existence of a Hamiltonian path; see for example \cite{Schr:2003}.
There are many ways to obtain an efficiently computable
upper bound on the edit distance of $L(\aut)$ that is always
at most equal to the number of states in $\aut$.
For example, that distance is always less than are equal
to the distance of  two shortest accepted words.
We agree to use this as a working upper bound:

\begin{lemma}\label{lem:bound}
For every NFA $\aut$ accepting at least two words
we have that $$\dist(L(\aut))\le \db_\aut,$$
where $\db_\aut$ is the edit distance of two
shortest words in $L(\aut)$.
\end{lemma}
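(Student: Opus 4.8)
The plan is to observe that the inequality is an immediate consequence of the way $\dist(L(\aut))$ is defined, namely as a minimum over all pairs of distinct accepted words. Recall that
\[
\dist(L(\aut)) = \min\{\dsid(u,v) \mid u,v\in L(\aut),\ u\neq v\}.
\]
Since this is a minimum taken over all such pairs, for \emph{any} particular pair $(u_0,v_0)$ of distinct words in $L(\aut)$ one automatically has $\dist(L(\aut))\le \dsid(u_0,v_0)$. The whole argument then reduces to choosing $(u_0,v_0)$ to be a pair of two shortest words.

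First I would fix such a pair. Because $\aut$ accepts at least two words, the set $L(\aut)$ contains at least two distinct elements; ordering $L(\aut)$ by length and breaking ties arbitrarily, I would let $u_0$ and $v_0$ be words realizing the two smallest lengths, so that $u_0\neq v_0$ and $\dsid(u_0,v_0)=\db_\aut$ by the definition of $\db_\aut$. I would then conclude directly: since $(u_0,v_0)$ is one of the pairs over which the minimum defining $\dist(L(\aut))$ is taken, we obtain $\dist(L(\aut))\le \dsid(u_0,v_0)=\db_\aut$, as required.

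There is essentially no obstacle here; the statement is definitional. The only point meriting a moment's care is the well-definedness of ``two shortest words'' when several accepted words share the minimal length, and this is handled by noting that the argument uses only that $(u_0,v_0)$ is \emph{one} admissible pair, so any valid choice of representatives yields the same upper bound. I note that the companion remarks in the surrounding text---that $\db_\aut$ is itself at most the number of states of $\aut$ and is efficiently computable---are separate from this inequality; they would be justified by exhibiting two short accepted words whose lengths are bounded by the number of states, via a breadth-first / shortest-path argument on $\aut$, rather than by the present lemma.
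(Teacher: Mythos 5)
Your proof is correct: the inequality is indeed immediate from the definition of $\dist(L(\aut))$ as a minimum over all pairs of distinct $L(\aut)$-words, instantiated at a pair of two shortest words. The paper treats the lemma exactly this way---it states it without proof as a direct consequence of the definition---so your argument matches the paper's (implicit) reasoning, with the added care about ties in length being a harmless refinement.
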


%%%%%%%%%%%%%%%%%%%%%%%%%%%%%%%%%%%%%%%%%%%%%%%%%%%%%%%%%
\section{Edit distance via error-detection and -correction}\label{sec:ed}
In \cite{KonSil:2010}, the authors discuss a
conceptual method for computing integral distances of regular languages---integral means that all distance values are positive integers---via
the property of error-detection.
In this section, we review that method and produce
a \emph{concrete} preliminary algorithm for computing
the edit distance of a regular language.
We also present here a similar method, via the property of error-correction, and the algorithm it entails. In fact this
latter algorithm \emph{estimates} the edit distance, as it returns two integers, differing by 1, one of which is the exact edit distance value.
Both algorithms have been implemented as
will be discussed
in section~\ref{sec:implem}.

A language $L$ is \emph{error-detecting for} a channel $\ch$, if no $L$-word can be received as a different $L$-word via $\ch$, that is,
for any words $u$ and $v$,
\[
u,v\in L\>\hbox{and}\> (u,v)\in\ch\>\>\rightarrow\>\> u=v
\]
\textbf{Note}: The definition of error-detection in
\cite{Kon:2002} uses $L\cup\{\e\}$ instead of $L$
in the above formula. This slight change makes the
presentation here simpler and has no bearing on any
existing results regarding error-detecting languages.

A language $L$ is \emph{error-correcting for} a channel $\ch$, if no two different $L$-words can result into the same word via $\ch$, that is,
\[
u,v\in L\>\hbox{and}\> (u,z),(v,z)\in\ch\>\>\rightarrow\>\> u=v
\]
This property of $L$ ensures that any output $z$ of the
channel can be corrected to a unique $L$-word.

\begin{remark}\label{rem:ed}
The error-detection method of \cite{KonSil:2010}, as well as the error-correction method, are based on the
following observations, where $\aut$ is an NFA
and $\tr$ is an input-preserving transducer.
\begin{enumerate}
  \item
   A language $L$ is error-detecting for $\chsid(m)$,
   if and only if
   $\dist(L)>m$.
  \item
   A language $L$ is error-correcting for $\chsid(k)$,
   if and only if
   $\dist(L)>2k$,  \cite{Levenshtein:66:en}.
  \item
  A language $L$ is error-detecting for a channel
  $\ch$ if and only if the relation
  \begin{equation}\label{eq:rel-ed}
    \ch\cap(L\times \al^*)\cap(\al^*\times L)
  \end{equation}
  is functional \cite{Kon:2002}.
  \item
  A language $L$ is error-correcting for a channel
  $\ch$ if and only if the relation
  \begin{equation}\label{eq:rel-ec}
      \ch^{-1}\cap(\al^*\times L)
  \end{equation}
  is functional \cite{Kon:2002}.
  \item
  Suppose $\aut$ accepts $L$ and $\tr$ realizes $\ch$.
  A transducer, denoted as $(\tr\downarrow\aut\uparrow\aut)$,
  that realizes relation~(\ref{eq:rel-ed}) can be constructed
  in time $O(\sz{\tr}\sz{\aut}^2)$.
  Moreover, a transducer, denoted as $(\tr^{-1}\uparrow\aut)$,
  that realizes relation~(\ref{eq:rel-ec}) can be constructed
  in time $O(\sz{\tr}\sz{\aut})$ \cite{Kon:2002}.
  \item
  There is a quadratic time algorithm that decides whether a given transducer is functional \cite{AllMoh:2003,BeCaPrSa2003}.
\end{enumerate}
\end{remark}
Using the above observations, we present first
the error-detection-based algorithm for computing
the desired edit distance, and further below the algorithm based on error-correction.
\begin{figure}[ht]
\begin{tabbing}
PAR \= No \= No \= NN \= NN \=\kill
\> Algorithm \texttt{DistErrDetect} \\
\> 0.\> Input: NFA $\aut$ \hspace{4mm} \\
\> 1.\> Let ${\db_\aut}$ be edit distance bound in Lemma~\ref{lem:bound}\\
\> 2.\> Let $\min\leftarrow 1$ and
$\max\leftarrow{\db_\aut}-1$ \\
\> 3.\> Perform binary search to find the largest $k$ in
      $\{\min,\ldots,\max\}$ \\
\>  \> for which $L(\aut)$ is error-detecting for $\chsid(k)$ as follows: \\
\>  \>  \textbf{while} ($\min\le \max$)\\
\>  \>  a)\> Let $k\leftarrow\lfloor(\min+\max)/2\rfloor$\\
\>   \> b)\> Construct transducer $\tr$ realizing the channel $\chsid(k)$---see Fig.~\ref{fig:inprestrans}\\
\>   \> c)\> Construct the transducer $\tr'\leftarrow(\tr\downarrow\aut\uparrow\aut)$\\
\>   \> d)\> If ($\tr'$ is functional)  let $\min\leftarrow k+1$\\
\>   \> \>  Else  let $\max\leftarrow k-1$\\
\> 4. \> \textbf{return} $\min$
\end{tabbing}
\end{figure}

\begin{figure}[ht]
\centering
\begin{tikzpicture}[shorten >=1pt,node distance=2.5cm,
		on grid,>=stealth',initial text=,
		every state/.style={inner sep=2pt, minimum size=.8cm},
		 phantom/.style={draw=white},font=\footnotesize]
	\node[state,initial,accepting] (q0) {$[0]$};
	\node[state,accepting] (p1) [right=of q0] {$[1]$};
	\node[state,accepting] (p2) [right=of p1] {$[2]$};
	\node[state,phantom] (p3) [right=of p2] {$\cdots$};
	\node[state,accepting] (p4) [right=of p3] {$[k]$};

	\path[->] (q0) edge node [below] {$\sigma\slash\lambda\quad\lambda\slash\sigma$}
                node [above] {$\sigma\slash\tau$} (p1)
            edge [loop below] node {$\sigma\slash\sigma$} ()
		 (p1) edge node [below] {$\sigma\slash\lambda\quad\lambda\slash\sigma$}
				node [above] {$\sigma\slash\tau$} (p2)
			edge [loop below] node {$\sigma\slash\sigma$} ()
		(p2) edge node [below] {$\sigma\slash\lambda\quad\lambda\slash\sigma$}
				node [above] {$\sigma\slash\tau$} (p3)
			edge [loop below] node {$\sigma\slash\sigma$} ()
		(p3) edge node [below] {$\sigma\slash\lambda\quad\lambda\slash\sigma$}
				node [above] {$\sigma\slash\tau$} (p4)
		(p4) edge [loop below] node {$\sigma\slash\sigma$} ();
\end{tikzpicture}
\parbox{4.3in}{\caption{An input-preserving transducer realizing the channel $\chsid(k)$. Each edge label $\sigma/\sigma$ represents many transitions, one for each symbol $\sigma$ of the alphabet, and similarly for $\sigma/\lambda$ and $\lambda/\sigma$. Each edge label $\sigma/\tau$ represents
many transitions, one for each pair of distinct symbols
$\sigma$ and $\tau$ from the alphabet. Thus, if the alphabet size is $r$, then the size of the transducer is $O(r^2k)$, as $r,k\to\infty$, or simply $O(k)$ if $r$ is fixed.}\label{fig:inprestrans}}
\end{figure}
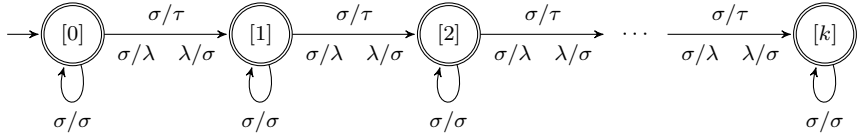

\begin{corollary}\label{th:ed}
Algorithm \texttt{DistErrDetect} computes the
edit distance of a language given via an NFA $\aut$ in time $$O(\sz{\aut}^4r^4\db_{\aut}^2\log\db_{\aut}),$$
where $r$ is the cardinality of the alphabet used in $\aut$.
\end{corollary}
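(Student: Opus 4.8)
The plan is to treat this as a direct assembly of the facts collected in Remark~\ref{rem:ed} together with the upper bound of Lemma~\ref{lem:bound}: first establish correctness of the binary search, then account for the running time iteration by iteration.

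For correctness, I would start from the first item of Remark~\ref{rem:ed}, which states that $L(\aut)$ is error-detecting for $\chsid(k)$ if and only if $\dist(L(\aut))>k$. This makes the predicate tested in the loop monotone in $k$: it holds for every $k<\dist(L(\aut))$ and fails for every $k\ge\dist(L(\aut))$, so the binary search for the largest $k$ satisfying it is justified and returns $\min=\dist(L(\aut))$ on termination. I would then verify that step~d genuinely tests this predicate: by the fifth item of Remark~\ref{rem:ed} the transducer $\tr'=(\tr\downarrow\aut\uparrow\aut)$ realizes relation~(\ref{eq:rel-ed}), and by the third item $L(\aut)$ is error-detecting for $\chsid(k)$ exactly when that relation is functional. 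Finally, Lemma~\ref{lem:bound} guarantees $\dist(L(\aut))\le\db_\aut$, so the target value $\dist(L(\aut))-1$ lies in, or just below, the initial search interval $\{1,\ldots,\db_\aut-1\}$; in either case returning $\min$ yields the correct value (the boundary case $\dist(L(\aut))=1$, where every tested $k$ fails, returns $\min=1$ as desired).

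For the running time, I would bound the cost of one iteration and multiply by the number of iterations. The loop runs $O(\log\db_\aut)$ times. Within an iteration the value of $k$ is at most $\db_\aut-1$, so by the caption of Fig.~\ref{fig:inprestrans} the transducer $\tr$ built in step~b has size $\sz{\tr}=O(r^2\db_\aut)$. By the fifth item of Remark~\ref{rem:ed}, step~c builds $\tr'$ in time $O(\sz{\tr}\,\sz{\aut}^2)$, and its size is bounded by the same quantity, namely $\sz{\tr'}=O(r^2\db_\aut\,\sz{\aut}^2)$. The functionality test of step~d, by the sixth item of Remark~\ref{rem:ed}, runs in time quadratic in the size of its input, that is, $O(\sz{\tr'}^2)=O(r^4\db_\aut^2\,\sz{\aut}^4)$, which dominates steps~b and~c. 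Multiplying by the iteration count yields the claimed bound $O(\sz{\aut}^4 r^4\db_\aut^2\log\db_\aut)$; the preliminary step~1, which computes $\db_\aut$ via two shortest accepted words (found by breadth-first search) followed by a standard edit-distance dynamic program, is absorbed into this total.

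The argument is essentially bookkeeping, so there is no deep obstacle; the point that requires care is composing the size bounds in the right order—$\sz{\tr}=O(r^2\db_\aut)$, then $\sz{\tr'}=O(\sz{\tr}\,\sz{\aut}^2)$, then squaring for the functionality test—and recognizing that this quadratic test, rather than either transducer construction, is what dominates the cost of each iteration.
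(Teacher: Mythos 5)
Your proposal is correct and follows essentially the same route as the paper's proof: correctness via items 1, 3, and 5 of Remark~\ref{rem:ed} together with the bound of Lemma~\ref{lem:bound}, and the running time obtained by multiplying the $O(\log\db_\aut)$ iterations by a per-iteration cost dominated by the quadratic functionality test, $O(\sz{\tr'}^2)=O(\sz{\aut}^4r^4\db_\aut^2)$. Your write-up is merely more explicit than the paper's in spelling out the monotonicity of the tested predicate and the intermediate size bound $\sz{\tr'}=O(r^2\db_\aut\sz{\aut}^2)$.
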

\begin{proof}
For the correctness of the algorithm, first note that the loop in step 3 is set up such that $L(\aut)$
is always error-detecting for $\chsid(\min-1)$. Also, based on the observations listed in the above remark, if $L(\aut)$ is error-detecting for $\chsid(k)$ but not for
$\chsid(k+1)$, then the desired distance
must be greater than $k$ and at most $k+1$, hence equal
to $k+1$.
\pnsi
For the time complexity, the while loop will
perform $O(\log {\db_\aut})$ iterations. In each
iteration, the value $k$ is
used to construct the transducer of
size $O(r^2k)$ shown in Fig.~\ref{fig:inprestrans} with alphabet
being the set of alphabet symbols appearing
in the description of $\aut$. Then, the transducer
$\tr'$ is constructed and its functionality
is tested in time $O(\sz{\aut}^4r^4k^2)$.
As $k<{\db_\aut}$, it follows that the total time complexity is
as required.
\end{proof}

We note that, in the worst case, $\db_{\aut}$ is
of order $O(\sz{\aut})$ and, assuming a fixed alphabet, the above algorithm operates in time
$$O(\sz{\aut}^6\log\sz{\aut}),$$
which  is asymptotically better
than the time complexity stated in \cite{Kon:2007} when the
given automaton is an NFA.

\pssi
Next we present
the error-correction-based algorithm for estimating
the desired edit distance.

\begin{figure}[ht]
\begin{tabbing}
PAR \= No \= No \= NN \= \kill
\> Algorithm \texttt{DistErrCorrect} \\
\> 0.\> Input: NFA $\aut$ \hspace{4mm} \\
\> 1.\> Let ${\db_\aut}$ be the bound in Lemma~\ref{lem:bound}\\
\> 2.\> Let $\min\leftarrow 1$ and $\max\leftarrow\lfloor({\db_\aut}-1)/2\rfloor$\\
\> 3.\> Perform binary search to find  the largest $k$ in
      $\{\min,\ldots,\max\}$ \\
\>  \> for which $L(\aut)$ is error-correcting for $\chsid(k)$ as follows: \\
\>  \>  \textbf{while} ($\min\le \max$)\\
\>  \>  a)\> Let $k\leftarrow\lfloor(\min+\max)/2\rfloor$\\
\>   \> b)\> Construct a transducer $\tr$ realizing the channel $\chsid(k)$\\
\>   \> c)\> Construct the transducer $\tr'\leftarrow(\tr^{-1}\uparrow\aut)$\\
\>   \> d)\> If ($\tr'$ is functional)  let $\min\leftarrow k+1$\\
\>   \> \>  Else  let $\max\leftarrow k-1$\\
\> 4. \>  \textbf{return} $\{2\min-1,2\min\}$
\end{tabbing}
\end{figure}

\begin{corollary}\label{cor:ec}
Algorithm \texttt{DistErrCorrect} returns
two values, differing by 1, one of which is the edit distance of
the language given via $\aut$, in time   $$O(\sz{\aut}^2r^4(\db_{\aut}/2)^2\log(\db_{\aut}/2)),$$
where $r$ is the cardinality of the alphabet used in $\aut$.
\end{corollary}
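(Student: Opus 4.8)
The plan is to mirror the proof of Corollary~\ref{th:ed}, the differences stemming from the fact that error-correction is governed by $\dist(L)>2k$ rather than $\dist(L)>m$, and that the relevant transducer is built with a single copy of $\aut$. I would first establish correctness through the binary-search invariant together with observations (2), (4), (5) and (6) of Remark~\ref{rem:ed}, and then bound the running time by counting iterations and the per-iteration cost.

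For correctness, I would first record the loop invariant: step~3 is arranged so that, before and after each iteration, $L(\aut)$ is error-correcting for $\chsid(\min-1)$. (When $\min=1$ this is vacuous, since $\chsid(0)$ is the identity relation and every language is error-correcting for it.) Each test in step~3 is sound because, by observation~(4), $L(\aut)$ is error-correcting for $\chsid(k)$ exactly when the relation (\ref{eq:rel-ec}) is functional; by observation~(5) this relation is realized by $\tr'=(\tr^{-1}\uparrow\aut)$, and by observation~(6) its functionality is decidable. The one point that needs attention is the choice of the upper limit $\max=\lfloor(\db_\aut-1)/2\rfloor$: by Lemma~\ref{lem:bound} we have $\dist(L(\aut))\le\db_\aut$, and by observation~(2) error-correction for $\chsid(k)$ forces $2k<\dist(L(\aut))\le\db_\aut$, hence $k\le\lfloor(\db_\aut-1)/2\rfloor$. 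Thus no value of $k$ above $\max$ can satisfy the property, so the binary search over $\{\min,\ldots,\max\}$ genuinely locates the largest $k$ for which $L(\aut)$ is error-correcting for $\chsid(k)$.

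When the loop terminates we have $\min=\max+1$, with $L(\aut)$ error-correcting for $\chsid(\min-1)$ but not for $\chsid(\min)$; the latter holds either because the test at $\min$ failed or, at the boundary, because $\min$ exceeds $\lfloor(\db_\aut-1)/2\rfloor$ and is thus ruled out by Lemma~\ref{lem:bound}. Applying observation~(2) to both facts gives $\dist(L(\aut))>2(\min-1)$ and $\dist(L(\aut))\le 2\min$, that is $2\min-2<\dist(L(\aut))\le 2\min$. Since $\dist(L(\aut))$ is an integer, it equals either $2\min-1$ or $2\min$, which is exactly the set returned in step~4. The factor of $2$ is the reason the procedure can only pin the distance down to two consecutive integers rather than a single value.

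For the running time, the while loop performs $O(\log(\db_\aut/2))$ iterations. In each iteration the transducer $\tr$ for $\chsid(k)$ has size $O(r^2 k)$ (Fig.~\ref{fig:inprestrans}), and by observation~(5) the transducer $\tr'=(\tr^{-1}\uparrow\aut)$ is built in time $O(\sz{\tr}\sz{\aut})=O(r^2 k\,\sz{\aut})$ and has size of the same order. By observation~(6) its functionality test then costs $O((r^2 k\,\sz{\aut})^2)=O(r^4 k^2\sz{\aut}^2)$, which dominates the iteration. Using $k\le\db_\aut/2$ and multiplying by the iteration count yields $O(\sz{\aut}^2 r^4(\db_\aut/2)^2\log(\db_\aut/2))$, as claimed. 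I expect the only genuinely delicate point to be the off-by-one bookkeeping of the binary-search invariant at the boundary $k=\max$; the gain of a factor $\sz{\aut}^2$ over Corollary~\ref{th:ed} is then an automatic consequence of $(\tr^{-1}\uparrow\aut)$ using a single copy of $\aut$, in contrast to the two copies in $(\tr\downarrow\aut\uparrow\aut)$.
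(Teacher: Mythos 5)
Your proof is correct and follows essentially the same route as the paper's: the binary-search invariant that $L(\aut)$ stays error-correcting for $\chsid(\min-1)$, observation~(2) of Remark~\ref{rem:ed} to convert the final bracket into $\{2\min-1,2\min\}$, and the per-iteration cost of building $(\tr^{-1}\uparrow\aut)$ and testing its functionality, with $k\le\db_\aut/2$. Your treatment of the boundary value $\max=\lfloor(\db_\aut-1)/2\rfloor$ is just a more explicit version of the paper's remark that $\db_\aut\ge 2k+1$ justifies the initial $\max$.
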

\begin{proof}
For the correctness of the algorithm, first note that the loop in step 3 is set up such that $L(\aut)$
is always error-correcting for $\chsid(\min-1)$. Also, based on the observations listed in the above remark, if $L(\aut)$ is error-correcting for $\chsid(k)$ but not for
$\chsid(k+1)$, then the desired distance
must be greater than $2k$ and at most $2(k+1)$, hence equal
to $2k+1$, or $2k+2$. Moreover, as ${\db_\aut}\ge2k+1$,
the initial value of $\max$ in step 2 is correct.
\pnsi
For the time complexity, the while loop will
perform $O(\log {\db_\aut})$ iterations. In each
iteration, the value $k$ is
used to construct the transducer of
size $O(r^2k)$ shown in Fig.~\ref{fig:inprestrans} with alphabet
being the set of alphabet symbols appearing
in the description of $\aut$. Then, the transducer
$\tr'$ is constructed and its functionality
is tested in time $O(\sz{\aut}^2r^4k^2)$.
As $k<{\db_\aut}/2$, it follows that the total time complexity is
as required.
\end{proof}
As noted before, in the worst case, $\db_{\aut}$ is
of order $O(\sz{\aut})$ and, assuming a fixed alphabet, the above algorithm operates in time
$$O(\sz{\aut}^4\log\sz{\aut}).$$
This time complexity is asymptotically better than the one in \cite{Kon:2007} even when the given automaton is a DFA.

%%%%%%%%%%%%%%%%%%%%%%%%%%%%%%%%%%%%%%%%%%%%%%%%%%%%%%%%%
\section{An $O(n^2d)$ algorithm for edit distance via input-altering transducers}\label{sec:iat}
In this section we present a new (exact) method for computing much faster the desired edit distance via input-altering transducers---see theorem~\ref{th:final} and the associated algorithm.
A transducer $\tr$ is called \emph{input-altering}, if
\[
w\notin\tr(w),\>\hbox{ for all words $w$,}
\]
that is, the output of $\tr$ is never equal to the input used. The new method is based on the following two major observations.
\begin{description}
  \item[(4.1)] The new result (see theorem~\ref{th:iat:ed}) that the property of error-detection for the channel $\chsid(k)$ can be described via an input-altering transducer $\tr_k$ of size $O(k)$.
  \item[(4.2)] The new observation that, using an input-altering transducer in our algorithms, eliminates the need for a binary search loop that builds a new transducer in each iteration. Instead, this loop can be replaced with the incremental construction of an NFA $\aut'_k$, which depends on $\tr_k$, until a certain condition is satisfied, in which case the value of $k$ is the desired edit distance---see further below for details.
\end{description}
The above observations are presented in two subsections.

\subsection{An input-altering transducer for error-detection}
We give first a quick summary of some concepts discussed
in \cite{DudKon:2012}.

\begin{remark}\label{rem:iat}
Let $\tr$ be an input-altering transducer. The \emph{property
$\cpt$ described by} $\tr$ is the set of all languages $L$ satisfying
\begin{equation}\label{eq:inpalt}
\tr(L)\cap L=\emptyset.
\end{equation}
As explained in \cite{DudKon:2012}, this concept constitutes a formal method for specifying certain code properties defined via abstract
binary relations \cite{Shyr:Thierrin:relations}, and allows one to decide efficiently the
\emph{property satisfaction problem} by testing
condition~(\ref{eq:inpalt}). In particular, condition~(\ref{eq:inpalt}) can be tested
in time
\begin{equation}\label{eq:cxty4}
O(\sz{\aut}^2\sz{\tr}),
\end{equation}
where $\aut$ is the NFA accepting the language $L$
and $\tr$ is the input-altering transducer
describing the property for which $L$ is to be tested.
This approach has led to the development of an
online language  server, called I-LaSer \cite{ilaser}.
\end{remark}

We shall show (see theorem~\ref{th:iat:ed}) that error-detection for $\chsid(k)$ is definable via the input-altering transducer
$\tsid$, which is shown in Fig.~\ref{fig:inaltertrans} and
defined next.
The value $i$ in a state $[i]$ or $[i,a]$ is called the {\em error counter}, meaning that any path from $[0]$ to a state with error counter $i$ has to be labeled $u\slash v$ such that $\dsid(u,v) \le i$.
More precisely, we will define the edges such that a state $[i,a]$ can be reached from $[0]$ via a path with label  $u\slash v$ if and only if $u = vax$ for some word $x$ and $i = \abs{ax}$, thus, $v$ is a proper prefix of $u$ and state $[i,a]$ remembers the left-most letter of $u$ that occurs after its prefix $v$.
A state $[i]$ with $i\ge 1$ can only be reached via a path labeled $u\slash v$ from $[0]$ if
%$u$ and $v$ have an error of
$1\le \dsid(u,v) \le i$, thus, $u\neq v$.
Furthermore, we make sure that for $u\neq v$ such that neither $u\le_p v$ nor $v\le_p u$ there is a path from $[0]$ to $[\dsid(u,v)]$ which is labeled by $u\slash v$ or $v\slash u$.

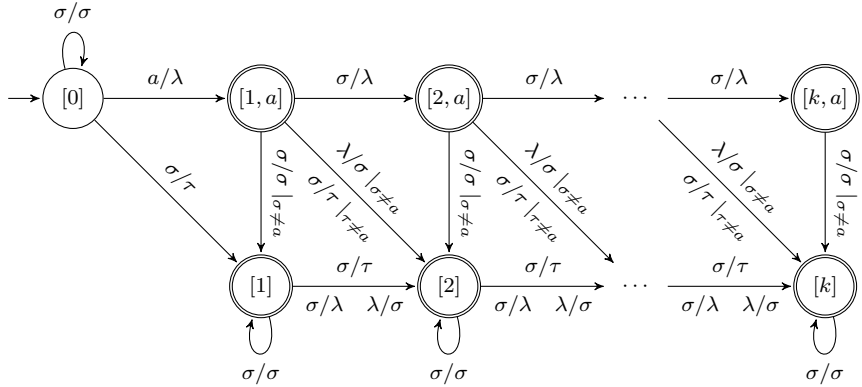
\begin{figure}[ht]
\centering
\begin{tikzpicture}[shorten >=1pt,node distance=2.5cm,
		on grid,>=stealth',initial text=,
		every state/.style={inner sep=2pt, minimum size=.8cm},
		 phantom/.style={draw=white},font=\footnotesize]
	\node[state,initial] (q0) {$[0]$};
	\node[state,accepting] (q1) [right=of q0] {$[1,a]$};
	\node[state,accepting] (q2) [right=of q1] {$[2,a]$};
	\node[state,phantom] (q3) [right=of q2] {$\cdots$};
	\node[state,accepting] (q4) [right=of q3] {$[k,a]$};

	\node[state,accepting] (p1) [below=of q1] {$[1]$};
	\node[state,accepting] (p2) [right=of p1] {$[2]$};
	\node[state,phantom] (p3) [right=of p2] {$\cdots$};
	\node[state,accepting] (p4) [right=of p3] {$[k]$};

	\path[->] (q0) edge node [above] {$a\slash\lambda$} (q1)
			edge [loop above] node {$\sigma\slash\sigma$} ()
			edge node [sloped,above] {$\sigma\slash\tau$} (p1)
		(q1) edge node [above] {$\sigma\slash\lambda$} (q2)
			edge node [sloped,above] {$\sigma\slash\sigma \mid_{\sigma\neq a}$} (p1)
			edge node [sloped,above] {$\lambda\slash\sigma \mid_{\sigma\neq a}$}
				node [sloped,below] {$\sigma\slash\tau  \mid_{\tau\neq a}$} (p2)
		(q2) edge node [above] {$\sigma\slash\lambda$} (q3)
			edge node [sloped,above] {$\sigma\slash\sigma \mid_{\sigma\neq a}$} (p2)
			edge node [sloped,above] {$\lambda\slash\sigma \mid_{\sigma\neq a}$}
				node [sloped,below] {$\sigma\slash\tau \mid_{\tau\neq a}$} (p3)
		(q3) edge node [above] {$\sigma\slash\lambda$} (q4)
			edge node [sloped,above] {$\lambda\slash\sigma \mid_{\sigma\neq a}$}
				node [sloped,below] {$\sigma\slash\tau \mid_{\tau\neq a}$} (p4)
		(q4) edge node [sloped,above] {$\sigma\slash\sigma \mid_{\sigma\neq a}$} (p4)
		(p1) edge node [below] {$\sigma\slash\lambda\quad\lambda\slash\sigma$}
				node [above] {$\sigma\slash\tau$} (p2)
			edge [loop below] node {$\sigma\slash\sigma$} ()
		(p2) edge node [below] {$\sigma\slash\lambda\quad\lambda\slash\sigma$}
				node [above] {$\sigma\slash\tau$} (p3)
			edge [loop below] node {$\sigma\slash\sigma$} ()
		(p3) edge node [below] {$\sigma\slash\lambda\quad\lambda\slash\sigma$}
				node [above] {$\sigma\slash\tau$} (p4)
		(p4) edge [loop below] node {$\sigma\slash\sigma$} ();
\end{tikzpicture}
\parbox{4.3in}{\caption{A segment of the
input-altering transducer $\tsid$: for each $a\in \Sigma$ the complete transducer has $k$ states of the form $[i,a]$.
The labels $\sigma$ and $\tau$ on an edge mean: one edge for each $\sigma,\tau\in\Sigma$ with $\sigma\neq \tau$; for some edge sets additional restrictions apply denoted, for example, by $\mid_{\sigma\neq a}$.}\label{fig:inaltertrans}}
\end{figure}

\begin{definition}\label{def:tr}
The transducer
$$\tsid = (Q,\al,\al,E,[0],F)$$
is defined as follows. The set of states is
\[
	Q = \sett{[i]}{0\le i\le k}\cup \sett{[i,a]}{1\le i\le k, a\in \Sigma}
\]
with all but the initial state $[0]$ being final states: $$F = Q\sm\set{[0]}.$$
The edges in $\tsid$ can be divided into the four sets of edges $E = E_0 \cup E_s \cup E_i \cup E_d$.
The edges from $E_0$ do not introduce any error, edges from the other sets model one substitution ($E_s$), insertion ($E_i$), or deletion ($E_d$):
\begin{align}
	E_0 = &\bsett{[i] \xra{\sigma\slash \sigma}[i]}{\sigma\in\Sigma, 0\le i\le k} \cup {} \label{eq:nerr1} \\
		&\bsett{[i,a] \xra{\sigma\slash\sigma}[i]}{a,\sigma\in\Sigma,a\neq \sigma,1\le i\le k} \label{eq:nerr2} \\
	E_s = &\bsett{[i] \xra{\sigma\slash\tau}[i+1]}{\sigma,\tau\in\Sigma,\sigma\neq \tau,0\le i < k} \cup {} \label{eq:sub1} \\
		&\bsett{[i,a] \xra{\sigma\slash \tau}[i+1]}{a,\sigma,\tau\in\Sigma,\sigma\neq \tau,a\neq \tau,1\le i < k} \label{eq:sub2} \\
	E_i = &\bsett{[i] \xra{\e\slash \sigma}[i+1]}{\sigma\in\Sigma,1\le i< k} \cup {} \label{eq:ins1} \\
		&\bsett{[i,a] \xra{\e\slash \sigma}[i+1]}{a,\sigma\in\Sigma,a\neq \sigma,1\le i < k} \label{eq:ins2} \\
	E_d = &\bsett{[0] \xra{a\slash \e}[1,a]}{a\in\Sigma} \cup {} \label{eq:del0} \\
		&\bsett{[i] \xra{\sigma\slash \e}[i+1]}{\sigma\in\Sigma,1\le i< k} \cup {} \label{eq:del1} \\
		&\bsett{[i,a] \xra{\sigma\slash\e}[i+1,a]}{a,\sigma\in\Sigma,1\le i < k} \label{eq:del2}
\end{align}
\end{definition}
\pmsn \textbf{Terminology}. If $\tr=(Q,\al,\al,E,q_0,F)$ is a transducer in standard form, then we write $\etr$ for the NFA $$\etr=(Q,\ealph,E,q_0,F)$$
over the edit alphabet $\ealph$, where the labels of
the transitions in $\tr$ are viewed as elements of $\ealph$. Note that, the label of a path $P$ in $\tr$ is a pair of words $(u/v)$, whereas the label of the
corresponding path in $\etr$, which we denote as $\epath$, is an edit string
$h$ such that $\inp(h)=u$ and $\out(h)=v$.
This type
of NFA is called an eNFA in \cite{KaKo:2004}.

\begin{lemma}\label{lem:Adi:props}
Let $k\in\N$ and let $u,v$ be words. The following statements hold true with respect to the transducer $\tsid$.
\begin{enumerate}[i.)]
\item
In $\etsid$, every path from the start state $[0]$ to any state $[i]$ or $[i,a]$ has as label a reduced edit string
whose weight is equal to $i$.
\item
If $1\le\dsid(u,v)\le k$ and $h$ is a reduced edit string realizing $\dsid(u,v)$, then
$h$ is accepted by $\etsid$.
\item
If $v\in\tsid(u)$, then $1\le\dsid(u,v)\le k$.
\item
If $\dsid(u,v)\le k$ and $va \le_p u$, for some symbol $a$, then   $[0]\xras{u\slash v} [\dsid(u,v),a]$.
\item
If $i\in\N$ and $i+\dsid(u,v)\le k$, then $[i]\xras{u\slash v} [i+\dsid(u,v)]$.
\end{enumerate}
\end{lemma}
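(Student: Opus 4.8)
The plan is to treat statement (i) as the structural backbone and derive the other four from it together with Lemma~\ref{lem:didist}. The starting observation is a weight/counter invariant: every edge either keeps the error counter fixed (exactly the non-error edges of $E_0$, all labeled $(\sigma/\sigma)$) or raises it by one (exactly the error edges of $E_s\cup E_i\cup E_d$). Hence along any path from $[0]$ the counter equals the number of error edges traversed, i.e. the weight of the label, which settles the ``weight $=i$'' half of (i). For the ``reduced'' half I would argue from the local topology of $\tsid$ at $[0]$: the only edges leaving $[0]$ are matches (self-loops), one substitution $[0]\to[1]$, and the deletion $[0]\xra{a/\e}[1,a]$, so the first error on any path is never an insertion (there is no insertion edge out of $[0]$), giving condition~(a). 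For condition~(b) the key is that the states $[i,a]$ form a ``deletion column'': after the initial $(a/\e)$ one stays in the column only by further deletions~(\ref{eq:del2}), while every column-exit edge---match, substitution, or insertion---carries the restriction that its output be $\neq a$. Thus the first non-deletion operation after the initial deletion has output $\neq a$, which is exactly condition~(b). (The degenerate endpoint $[0]$, i.e. $i=0$, yields an all-match, weight-$0$ label and is handled trivially.)

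Statement (iii) then follows from (i). If $v\in\tsid(u)$ there is an accepting path whose edit-string label $h$ (with $\inp(h)=u$, $\out(h)=v$) is, by (i), reduced of weight equal to its terminal counter; since every final state has counter $\ge1$ and the counter never exceeds $k$, we get $1\le\weight{h}\le k$, whence $\dsid(u,v)\le\weight{h}\le k$. The remaining inequality $\dsid(u,v)\ge1$ amounts to $u\neq v$, which I would get from reducedness by a short case analysis on the first error of $h$: writing $h=p\,e\,s$ with $p$ the leading run of matches, if $e$ is a substitution then $\inp(es)$ and $\out(es)$ differ in their first symbol, while if $e=(a/\e)$ then $\inp(es)$ begins with $a$ whereas, by condition~(b), $\out(es)=\out(s)$ is either empty or begins with a symbol $\neq a$. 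Either way $\inp(es)\neq\out(es)$, so $u\neq v$.

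For statement (ii) I would run the argument of (i) in reverse, building an accepting path by reading the given reduced $h$ left to right. Matches before the first error loop at $[0]$; the first error is a substitution (take $[0]\to[1]$) or a deletion $(a/\e)$ (take $[0]\xra{a/\e}[1,a]$), the insertion case being excluded by reducedness. In the substitution case, and after any column is exited, the path proceeds entirely through the $[\,\cdot\,]$ states, where for counter $\ge1$ each of the four operation types has a matching edge. In the deletion case the initial deletion run is absorbed by the column via~(\ref{eq:del2}), and condition~(b) guarantees the first following non-deletion has output $\neq a$, so a column-exit edge is available precisely when needed; note that deletions from $[i]$, $i\ge1$, stay in the $[\,\cdot\,]$ row, so no later column is ever re-entered. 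Since $\weight{h}=\dsid(u,v)\le k$ the counter stays $\le k$ and every error edge is taken while the counter is $<k$; the path ends at a state of counter $\weight{h}\ge1$, which is final, so $h$ is accepted.

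Statements (iv) and (v) are explicit constructions. For (iv), from $va\le_p u$ we have $u=vax$ and, by Lemma~\ref{lem:didist}(2), $\dsid(u,v)=|ax|$; the required path matches $v$ against $v$ at $[0]$, takes $[0]\xra{a/\e}[1,a]$, and deletes $x$ symbol by symbol inside the column~(\ref{eq:del2}), landing in $[\dsid(u,v),a]$. For (v), starting at $[i]$ with $i\ge1$, I would read any edit string realizing $\dsid(u,v)$ while staying in the $[\,\cdot\,]$ row; since the source counter never drops below $i\ge1$, the insertion and deletion edges remain available throughout, so the path terminates at $[i+\dsid(u,v)]$ (reducedness is not even needed here). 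I expect the genuine difficulty to lie in (i)--(iii): the delicate point throughout is verifying that the deletion column together with its ``$\neq a$'' exit restrictions captures condition~(b) of reducedness \emph{exactly}, neither admitting a non-reduced label nor blocking a reduced one.
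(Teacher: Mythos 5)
Your proposal is correct. For statements (i), (ii), (iv) and (v) it is essentially the paper's own argument: the counter-equals-weight invariant together with the structural analysis of the edges at $[0]$ and of the deletion columns gives (i); reading the reduced string left to right and matching each operation to an available edge gives (ii); and the explicit paths you build for (iv) and (v) are exactly those in the paper (including the appeal to Lemma~\ref{lem:didist} in (iv)). The one point where you take a genuinely different route is the inequality $\dsid(u,v)\ge 1$, i.e.\ $u\neq v$, in statement (iii). The paper proves it by enumerating the possible shapes of the accepting path: either it ends in a column state $[i,a]$, in which case the label is a run of matches followed by deletions and $v$ is a proper prefix of $u$, or it ends at a row state $[i]$, with three subcases according to whether the path stays in the row $[0],[1],\ldots,[i]$, or passes through a column and exits by a match, or exits by an error; in each case one checks directly that input and output differ. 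You instead derive $u\neq v$ from the reducedness half of (i): writing the label as $h=p\,e\,s$ with $p$ the leading matches, either $e$ is a substitution, and the two words differ at the first symbol after the common prefix $\inp(p)=\out(p)$, or $e=(a/\e)$, and then $\inp(es)$ starts with $a$ while $\out(es)=\out(s)$ is empty or starts with a symbol $\neq a$ by condition (b) of reducedness (here one quietly uses that a non-deletion operation has nonempty output, so condition (b) really does control the first symbol of $\out(s)$). This is a cleaner decomposition: it makes (iii) an actual corollary of (i) and replaces the paper's three-subcase path analysis with a two-case analysis of the first error; the price is that the reducedness claim in (i), which the paper states with only a brief ``one verifies'' justification, must be established carefully---which you do, via the observation that every column-exit edge is a non-deletion whose output differs from the remembered symbol $a$.
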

\begin{proof}
The \underline{first} statement follows when we note that the  definition of $\tsid$ and $\etsid$ implies the
following facts: (a) An edge exists between a state
with error counter $i$ to one with error counter  $i+1$,
if and only if the label of that edge is an error;
thus, in any path from $[0]$ to $[i]$ or $[i,a]$, the label of that path consists of exactly $i$ errors.
(b) Any edit string accepted by $\etsid$ is indeed reduced.
\par
For the \underline{second} statement, consider any reduced edit string $h$ realizing $\dsid(u,v)$.
If the first error in $h$ is a deletion, then $h$ is of
the form
      \[
      h=(e_1\cdots e_r)(a/\e)(b_1/\e)\cdots(b_d/\e)h',
      \]
where each $e_i$ is a non-error edit operation
of the form $(\sigma_i/\sigma_i)$, $(a/\e)$
is a deletion error, $d\in\N_0$ and each $(b_j/\e)$ is
a deletion error, and $h'$ is an edit string that is
either empty
or starts with a non-deletion edit operation $(x/y)$
such that $y\not=a$. If $h'$ is nonempty, then by definition of $\etsid$ the
following is a path
\[
[0]\xras{(e_1\cdots e_r)}[0]\xras{(a/\e)(b_1/\e)\cdots(b_d/\e)}[1+d,a]
\xras{h'}[1+d+\weight{h'}]
\]
accepting $h$. Similarly, a path accepting $h$ exists in
$\etsid$ , if $h'$ is empty.
\par
Finally, one verifies that if the first error in $h$
is a substitution, then again $h$ is accepted by $\etsid$.
\par
For the \underline{third} statement, if $v\in\tsid(u)$, then $(u/v)$ is the label of a path $P$ from $[0]$
to a final state $[i]$ or $[i,a]$, with $0<i\le k$. As the label of the path $\epath$ has exactly $i$ errors,
it follows that $\dsid(u,v)\le i\le k$.
\par
 We also need to show that $\dsid(u,v)\ge1$, that is, $u\not=v$.
First consider the case where the path $P$ ends at $[i,a]$, with $1\le i\le k$. Then,
the label of $\epath$ is an edit string of the form
$$h=(\sigma_1/\sigma_1)\cdots(\sigma_r/\sigma_r)(a/\e)
(b_1/\e)\cdots(b_d/\e)$$
and $u=\inp(h)= \sigma_1\cdots \sigma_r a b_1\cdots b_d$ and $v=\out(h)=\sigma_1\cdots\sigma_r$.
Hence, $u\not=v$. Now consider the case where the path $P$ ends at state $[i]$.
There are three cases. (a) The states used in the path are $[0], [1],\ldots,[i]$.
(b) The states used in $P$ are $[0],[1,a],\ldots,[r,a],[r],\ldots,[i]$, for some appropriate $[r]$.
(c)  The states used in $P$ are $[0],[1,a],\ldots,[r,a],[r+1],\ldots,[i]$, for some appropriate $[r]$.
In all three cases, one verifies that $u\not=v$. For example, in case (b), $u$ must be of the form
$xa\sigma_1\cdots\sigma_{r-1}\sigma y$ and $v$ of the form $x\sigma z$, where the
$\sigma_j$'s are symbols, $x,y,z$ are words, and $\sigma$ is a symbol other than $a$; hence,
$u\not=v$.
\par
For the \underline{fourth} statement, let $u=a_1\cdots a_r ab_1\cdots b_t$, with each $a_i$ and $b_j$ being a symbol,
and $v=a_1\cdots a_r$.
We use lemma~\ref{lem:didist}.
The edit string $$h=(a_1/a_1)\cdots(a_r/a_r)(a/\e)(b_1/\e)\cdots(b_t/\e)$$
realizes $\dsid(u,v)$. Moreover, this edit string is the label of a path in $\etsid$ from
$[0]$ to $[\dsid(u,v),a]$. Hence, there is a path
in $\tsid$ from $[0]$ to $[\dsid(u,v),a]$ with
label $(\inp(h)/\out(h))=(u/v)$, as required.
\par
For the \underline{fifth} statement, let $h$ be an edit string realizing $\dsid(u,v)$. By definition
of $\etsid$, at each state of the form $[j]$ with $0<j<k$, one can follow an edge whose
label $e$ can be of any of the
four types of edit operations, and moreover, if $e$ is an error, then the edge goes into $[j+1]$,
that is, $[j]\xra{e}[j+1]$.
As $h$ contains exactly $\dsid(u,v)$ errors, there is a path from $[i]$ to $[i+\dsid(u,v)]$ whose label is made of the edit operations in $h$. Hence, there is a path in $\tsid$ from $[i]$ to $[i+\dsid(u,v)]$ whose label is $(u/v)$, as required.
\end{proof}

\begin{theorem}\label{th:iat:ed}
For each $k\in\N$, the transducer $\tsid$ is input-altering and of size $O(k)$, and
describes the property of error-detection for the channel $\chsid(k)$.
\end{theorem}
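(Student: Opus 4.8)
The plan is to settle the three assertions in order of increasing depth, offloading essentially all of the combinatorial content onto Lemma~\ref{lem:Adi:props}. The size bound is a direct count from Definition~\ref{def:tr}: there are $k+1$ states $[i]$ and $k\sz{\al}$ states $[i,a]$, and each of $E_0,E_s,E_i,E_d$ contributes $O(\sz{\al}^2k)$ transitions, so for a fixed alphabet both totals are $O(k)$ and $\sz{\tsid}=O(k)$. For the input-altering property I would simply invoke statement (iii) of Lemma~\ref{lem:Adi:props}: if $v\in\tsid(u)$ then $1\le\dsid(u,v)$, hence $u\neq v$. Taking $u=v=w$ and using $\dsid(w,w)=0$ shows that $w\in\tsid(w)$ can never hold, so $\tsid$ is input-altering.

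The substance is the claim that $\tsid$ describes error-detection for $\chsid(k)$, and my plan is to prove the sharp statement that $\tsid$ realizes exactly the ``error part'' of the channel, namely
\[
R(\tsid)=\{(u,v)\mid 1\le\dsid(u,v)\le k\}.
\]
The inclusion $\subseteq$ is again statement (iii) of Lemma~\ref{lem:Adi:props}. For $\supseteq$, suppose $1\le\dsid(u,v)\le k$; then $u\neq v$, so the third statement of Lemma~\ref{lem:didist} yields a \emph{reduced} edit string $h$ realizing $\dsid(u,v)$, and statement (ii) of Lemma~\ref{lem:Adi:props} places $h$ in the language of $\etsid$, which means precisely that $v\in\tsid(u)$.

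Granting this identity, the theorem follows by unwinding definitions. Since a pair lies in $\chsid(k)$ iff some edit string of weight $\le k$ maps $u$ to $v$, we have $\chsid(k)=\{(u,v)\mid\dsid(u,v)\le k\}$, so error-detection for $\chsid(k)$ says exactly that no two distinct $L$-words are within edit distance $k$, i.e.\ that no pair $u,v\in L$ belongs to $R(\tsid)$. This is literally the condition $\tsid(L)\cap L=\es$ of Remark~\ref{rem:iat}. Hence for every $L$ the language is error-detecting for $\chsid(k)$ if and only if it satisfies the property described by $\tsid$, as required.

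I expect no genuine obstacle at the level of the theorem, since Lemma~\ref{lem:Adi:props} has already absorbed the hard work; the single point requiring care is the completeness direction $\supseteq$. There it is not enough to know that \emph{some} minimal-weight edit string from $u$ to $v$ exists---$\etsid$ reads only reduced ones---so the appeal to the third statement of Lemma~\ref{lem:didist} is indispensable. Establishing the exact equality $R(\tsid)=\{(u,v)\mid 1\le\dsid(u,v)\le k\}$ rather than a one-sided inclusion is precisely what upgrades the conclusion from a merely sufficient or necessary condition to a genuine characterization of error-detection.
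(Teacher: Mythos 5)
Your handling of the size bound and the input-altering property is fine and matches the paper (both rest on Lemma~\ref{lem:Adi:props}(iii)). The gap is in the centerpiece of your argument: the claimed identity $R(\tsid)=\{(u,v)\mid 1\le\dsid(u,v)\le k\}$ is false, because $R(\tsid)$ is \emph{not symmetric}. By Definition~\ref{def:tr}, every insertion edge in $E_i$ leaves a state with error counter $i\ge 1$; no insertion edge leaves $[0]$. Concretely, take $u=a$, $v=ab$: then $1\le\dsid(u,v)=1\le k$, yet $(a,ab)\notin R(\tsid)$. Any path labeled $(a/ab)$ would need an insertion edge, hence would first have to leave $[0]$ by a deletion or substitution edge; checking the three possible first moves ($a/a$ loop at $[0]$, $a/\e$ into $[1,a]$, $a/\tau$ into $[1]$) shows none can be completed to produce output $ab$ from input $a$ (for instance, from $[1,a]$ the available insertions $\e/\sigma$ require $\sigma\neq a$, but the remaining output to produce begins with $a$). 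This asymmetry is by design: the paper's preamble to Definition~\ref{def:tr} promises only that for non-prefix-comparable $u\neq v$ there is a path labeled $u/v$ \emph{or} $v/u$.

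The error traces back to your reading of Lemma~\ref{lem:didist}(3). Its proof inverts the edit string whenever the first error is an insertion, so what it actually guarantees is a reduced edit string of weight $\dsid(u,v)$ whose input and output parts are $u$ and $v$ \emph{in one of the two orders}; under your strict ordered reading the lemma itself would be false (for $u=a$, $v=ab$, every minimal edit string, e.g.\ $(a/a)(\e/b)$, has an insertion as its first error, hence is not reduced). Consequently, combining Lemma~\ref{lem:didist}(3) with Lemma~\ref{lem:Adi:props}(ii) yields only the disjunction $(u,v)\in R(\tsid)$ or $(v,u)\in R(\tsid)$, not $v\in\tsid(u)$. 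The good news is that this disjunction suffices: if $u,v\in L$ are distinct with $\dsid(u,v)\le k$, either membership, together with $u,v\in L$, forces $\tsid(L)\cap L\neq\es$, and the converse direction still follows from Lemma~\ref{lem:Adi:props}(iii). With that single weakening your argument becomes a correct proof, and in fact a shorter one than the paper's: the paper never invokes Lemma~\ref{lem:Adi:props}(ii), but instead re-derives the disjunction inside the proof of Theorem~\ref{th:iat:ed} by an explicit case analysis on the first error of a minimal edit string (prefix cases via statement (iv), non-prefix cases via statement (v)). So state and prove the disjunctive completeness claim rather than the exact identity, and adjust the final equivalence accordingly.
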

\begin{proof}
By construction, it follows that $\tr_k$ is trim and has a number of states and transitions that is linear with respect to $k$. Hence, it is indeed of size $O(k)$.
The third statement of lemma~\ref{lem:Adi:props} implies that the transducer is input-altering.
For the error-detection part, using the first statement of remark~\ref{rem:ed}, it is sufficient  to show that, for every language $L$,
\[
\tsid(L)\cap L=\emptyset \>\hbox{ if and only if }\> \dsid(L)>k.
\]

First, for the `if' part, assume $\dsid(L)>k$ and consider any words $u,v\in L$.  We need to
prove $v\notin\tsid(u)$.
If $u=v$ then this holds as $\tsid$ is input-altering. Else, it follows from the third statement of
lemma~\ref{lem:Adi:props}.  Now for the `only if' part, assume
\begin{equation}\label{eq:iatp}
\tsid(L)\cap L=\emptyset,
\end{equation}
but, for the sake of contradiction,  suppose there are different words $u,v\in L$ such that $1\le\dsid(u,v)\le k$. %Without loss of generality assume also $|v|\le|u|$.
If $v$ is a prefix of $u$, then
$va\le_p u$, for some $a\in\al$, and
the fourth statement of the above lemma implies
$[0]\xras{u\slash v} [\dsid(u,v),a]$ and, therefore, $v\in\tsid(u)$, which contradicts~(\ref{eq:iatp}).
By symmetry, a contradiction arises if $u$ is a prefix of $v$.

Now consider the case where $v$ is not a prefix of  $u$, and $u$ is not a prefix of $v$. Then, $u=xau'$ and $v=xbv'$ for some words $x,u',v'$ and   symbols $a,b\in\al$ with $$a\not=b.$$
\emph{We shall obtain a contradiction
to (\ref{eq:iatp}) by showing the existence of a path $[0]\xras{u\slash v}\psi$, or $[0]\xras{v\slash u}\psi$, where $\psi$ is a final state of $\tsid$}.
Let $h$ be an edit string realizing
$\dsid(au',bv')$. Recall $\dsid(u,v)=\dsid(au',bv')$. As $a\not=b$, the first edit operation, say  $e$, of $h$ must
be an error, that is, not of the form $\sigma/\sigma$. Let $h=eh'$. We consider three cases for $e$.
First, if $e$ is a substitution, then $e=(a/b)$  and $\dsid(u,v)=1+\dsid(u',v')$. By the fifth statement of the above lemma, $[1]\xras{u'\slash v'} [1+\dsid(u',v')]$. Then, the required path is
\[
[0]\xras{x/x}[0]\xras{a/b}[1]\xras{u'\slash v'} [1+\dsid(u',v')].
\]

Now consider the case where $e=(a/\e)$. Then, $\dsid(u,v)=1+\dsid(u',bv')$ and $h'$ realizes $\dsid(u',bv')$. Let $d$ be the number of deletions (if any) at the beginning of $h'$ so that any edit operation
following these deletions is not a deletion. Thus, $h'$ is of the form $h_1h_2$ with $\inp(h_1)=u_1$
and $\out(h_1)=\e$, where $u_1$ is a word of length $d$, and
$\inp(h_2)=u_2$ and $\out(h_2)=bv'$, for some word $u_2$, and
$u'=u_1u_2$,
and $\dsid(u',bv')=d+\dsid(u_2,bv')$.
As $bv'$ is nonempty, also $h_2$ is nonempty, so let $e'$ be the first edit operation of $h_2$, which
cannot be a deletion.
If $u_2=\e$, then $e'=(\e/b)$ and $h_2$ consists of insertions, and the required path is
\[
[0]\xras{x/x}[0]\xras{a/\e}[1,a]\xras{u_1\slash \e}[1+d,a]\xras{e'} [1+d+1]\xras{\e\slash v'} [1+d+1+\dsid(\e,v')].
\]
If $u_2\not=\e$, then there is a symbol $c$ such that $u_2=cu_2'$.
If $c=b$, then  $e'$ cannot be a substitution, so it must be the non-error $(c/b)$ or the insertion
$(\e/b)$. Then, the required path is
\[
[0]\xras{x/x}[0]\xras{a/\e}[1,a]\xras{u_1\slash \e}[1+d,a]\xras{e'}[1+d+t]
\xras{z\slash v'} [1+d+t+\dsid(z,v')],
\]
where $z=u_2'$ and $t=0$ (case of $e'=(c/b)$), or $z=cu_2'$ and $t=1$ (case of $e'=(\e/b)$).
If $c\not=b$, then $e'$ must be the insertion $(\e/b)$ or the substitution $(c/b)$. Again, in either case,
a path as required exists.

Finally, the case of $e=(\e/b)$, is symmetric to the previous one by simply switching the
roles of $u$ and $v$.
\end{proof}

%%%%%%%%%%%%%%%%%%%%%%%%%%%%%%%%%%%%%%%%%%%%%%%%%%%%%%%%
\subsection{The $O(n^2d)$ algorithm for edit distance}
Here we use the results of the previous subsection to arrive at an efficient algorithm for computing the desired
edit distance.
Remark~\ref{rem:iat}  and theorem~\ref{th:iat:ed}  imply that the \emph{intermediate} algorithm
\texttt{DistFirstInpAlter}
shown below correctly computes the desired edit distance. Moreover, by reasoning as in the proof of corollary~\ref{th:ed}, it follows that this algorithm executes in time $O(\sz{\aut}^2r^2\db_{\aut}\log\db_{\aut})$, where $r$ is the cardinality of the alphabet used in $\aut$.

\begin{figure}[ht]
\begin{tabbing}
PAR \= No \= No \= NN \= NN \=\kill
\> Algorithm \texttt{DistFirstInpAlter} \\
\> 0.\> Input: NFA $\aut$ \hspace{4mm} \\
\> 1.\> Let ${\db_\aut}$ be the bound in Lemma~\ref{lem:bound}\\
\> 2.\> Let $\min\leftarrow 1$ and $\max\leftarrow{\db_\aut}-1$ \\
\> 3.\> Perform binary search to find  the largest $k$ in
      $\{\min,\ldots,\max\}$ \\
\>  \> for which $L(\aut)$ is error-detecting for $\chsid(k)$ as follows: \\
\>  \>  \textbf{while} ($\min\le \max$)\\
\>  \>  a)\> Let $k\leftarrow\lfloor(\min+\max)/2\rfloor$\\
\>   \> b)\> Construct the transducer $\tsid$ \\
\>   \> c)\> Construct NFA  $\aut'$ accepting $\tsid(L(\aut))\cap L(\aut)$\\
\>   \> d)\> If ($\aut'$ accepts $\emptyset$)  let $\min\leftarrow k+1$\\
\>   \> \>  Else  let $\max\leftarrow k-1$\\
\> 4. \> \textbf{return} $\min$
\end{tabbing}
\end{figure}

We note again that, in the worst case, $\db_{\aut}$ is
of order $O(\sz{\aut})$ and, assuming a fixed alphabet, the above algorithm operates in time
$$O(\sz{\aut}^3\log\sz{\aut}),$$
which is asymptotically better than those of all other
known algorithms. However, we now discuss in detail the second major observation stated in the beginning of section~\ref{sec:iat}, which leads to the most efficient algorithm in theorem~\ref{th:final}. In particular, for the sake of clarity, we present that algorithm in two steps.
In the first place, we notice that the while loop
in \texttt{DistFirstInpAlter}
can be replaced with the construction
of the automaton $\tsidp{{\db_\aut}-1}(\aut)\cap\aut$ and a search
in that automaton for a path from the start state to
a final one in which the error counter value is minimal (this
value would be the required edit distance).
\begin{tabbing}
PAR \= No \= No \= NN \= NN \=\kill
\> Algorithm \texttt{DistNextInpAlter} \\
\> 0.\> Input: NFA $\aut$ \hspace{4mm} \\
\> 1.\> Let ${\db_\aut}$ be the bound in Lemma~\ref{lem:bound}\\
\> 2.\> Construct the transducer $\tsidp{{\db_\aut}-1}$ \\
\> 3.\> Construct NFA  $\aut'$ accepting
        $\tsidp{{\db_\aut}-1}(L(\aut))\cap L(\aut)$\\
\> 4.\> Starting at the start state of $\aut'$, use breadth first search (BFS) \\
\>   \> to visit all states. In doing so, keep track of the smallest\\
\>   \> error counter $\min$ in the visited final states of $\aut'$.\\
\> 5. \> \textbf{return} $\min$
\end{tabbing}
As usual in product constructions, the states of $\aut'$ are triples of the form
$(\phi,q,q')$, where $\phi$ is a state of $\tsidp{{\db_\aut}-1}$, and $q$, $q'$ are states of $\aut$. The start state of $\aut'$ is
$([0],q_0,q_0)$, where $q_0$ is the start state of $\aut$, and the final states of $\aut'$ are those triples consisting of final states in $\tsidp{{\db_\aut}-1}$ and $\aut$. A transition
\[
(\phi,q,q')\xra{y}(\psi,r,r')
\]
exists in $\aut'$ if and only if the following transitions
\[
\phi\xra{x\slash y}\psi,\quad
q\xra{x} r,\quad
q'\xra{y} r'
\]
exist in $\tsidp{{\db_\aut}-1}$, $\aut^\e$ and $\aut^\e$, respectively,
for some label $x$, where $\aut^\e$ results if we add to
$\aut$ empty  loop transitions $(q,\e,q)$ for all states
$q$ in $\aut$. The correctness of the above algorithm follows
from lemma~\ref{lem:Adi:props} and the definition of
$\aut'$. The breadth first search process requires time
linear with respect to the size of $\aut'$, which is
\[
O(\sz{\aut}^2\db_{\aut}),
\]
and this also is the time complexity of the above
algorithm (when the alphabet is fixed).
\par
The final improved algorithm results if we notice that
the desired edit distance can be much smaller
than $\db_{\aut}$ and that it can be computed
using only an `initial' part of $\tsidp{{\db_\aut}-1}$.
In other words, one can first build $\tsidp{1}$ and
$\aut'_1$ accepting $\tsidp{1}(L(\aut))\cap L(\aut)$, and test whether $\aut'_1$ has any accepting path. If not, this
process is repeated by extending $\aut'_k$ to $\aut'_{k+1}$
until some extended automaton, say $\aut'_d$, has an
accepting path, in which case the desired distance is
equal to $d$.
\begin{tabbing}
PAR \= No \= No \= NN \= NN \=\kill
\> Algorithm \texttt{DistBestInpAlter} \\
\> 0.\> Input: NFA $\aut$ \hspace{4mm} \\
\> 1.\> Construct the transducer $\tsidp{1}$ \\
\> 2.\> Construct NFA  $\aut'$ accepting $\tsidp{1}(L(\aut))\cap L(\aut)$ \\
\> 3.\> $k\leftarrow1$ \\
\> 4. \>  \textbf{while} ($\aut'$ has no accepting path)\\
\>  \>  a)\> $\aut'\leftarrow \texttt{Extend}(\aut',k)$\\
\>   \> b)\> $k\leftarrow k+1$  \\
\> 5. \> \textbf{return} $k$
\end{tabbing}
The function \texttt{Extend} in the above algorithm works
based on the structure of $\tsid$ in Fig.~\ref{fig:inaltertrans} and is \emph{partially} shown below. For clarity,
we emphasize the fact that, in each step $k$ of this algorithm, the final states of $\aut'$ are only
triples of the form $([k,a],f,f')$ or $([k],f,f')$, that is,
when $i<k$, no triples of the form $([i,a],f,f')$ or $([i],f,f')$ are final states in $\aut'$.
\begin{tabbing}
PAR \= No \= No \= NN \= NN \=NN \=\kill
\> Function \texttt{Extend}$(\aut',k)$ (partial view)\\
\> let $\autb $ be a copy of $\aut'$ \\
\> for each state of the form $([k,a],q,q')$ in $\aut'$ \\
\> \> for each transitions $q\xra{\sigma}r$ and $q'\xra{\sigma'}r'$ in $\aut$ \\
\>\>\> if ($a\not=\sigma'$ and $\sigma\not=\sigma'$)\\
\>\>\>\> add to $\autb $ the transition $([k,a],q,q')\xra{\sigma/\sigma'}([k+1],r,r')$\\
\>\>\>\> if $r$ and $r'$ are final in $\aut$ then
$([k+1],r,r')$ is final in $\autb $\\
\>\>\> if ($a\not=\sigma'$ and $\sigma=\sigma'$)\\
\>\>\>\> add to $\autb $ the transition $([k,a],q,q')\xra{\sigma/\sigma}([k],r,r')$\\
\>\>\>\> if $r$ and $r'$ are final in $\aut$ then
$([k+1],r,r')$ is final in $\autb $\\
\>\>\> $\cdots\cdots\cdots$\\
\> \textbf{return} the NFA $\autb $
\end{tabbing}
Based on the above discussion, the correctness of the following theorem has been established.
\begin{theorem}\label{th:final}
Algorithm \texttt{DistBestInpAlter} computes the
edit distance of the language given via an NFA $\aut$ in
time $O(\sz{\aut}^2r^2d)$, where $r$ is the cardinality
of the alphabet used in $\aut$ and $d$ is the
computed edit distance.
\end{theorem}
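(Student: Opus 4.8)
Write $d=\dsid(L(\aut))$, and for $k\ge 1$ let $\aut'_k$ be the automaton held in the variable $\aut'$ when the while-guard of \texttt{DistBestInpAlter} is evaluated with counter $k$. By construction $\aut'_k$ is the product of $\tsidp{k}$ with two copies of $\aut^\e$ that accepts $\tsidp{k}(L(\aut))\cap L(\aut)$, except that a triple $(\phi,q,q')$ is declared final only when $\phi$ is an error-counter-$k$ state of $\tsidp{k}$ and $q,q'$ are final in $\aut$. The plan is to verify (i) that the incremental routine \texttt{Extend} really produces these automata, (ii) that the loop first fails its guard exactly at $k=d$, and (iii) that the whole run costs $O(\sz{\aut}^2 r^2 d)$.

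For (i) I would exploit the layered shape of $\tsidp{k}$ from Definition~\ref{def:tr}: the states and edges of $\tsidp{k}$ form a sub-transducer of $\tsidp{k+1}$, the only additional material being the states $[k+1]$ and $[k+1,a]$ and the edges of $E_0,E_s,E_i,E_d$ that leave an error-counter-$k$ state. Consequently, obtaining the product with $\tsidp{k+1}$ from the product with $\tsidp{k}$ amounts to adjoining, for each reachable triple $([k,a],q,q')$ or $([k],q,q')$, the product transitions induced by those new edges together with matching pairs of $\aut$-transitions, and to moving the final-state marking up one layer -- which is exactly the effect of \texttt{Extend}. A short induction then identifies the variable $\aut'$ at counter $k$ with $\aut'_k$.

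For (ii) I would use Lemmas~\ref{lem:didist} and~\ref{lem:Adi:props}. If $k<d$ and $\aut'_k$ had an accepting path, then by statements (i) and (iii) of Lemma~\ref{lem:Adi:props} the path label would give distinct words $u,v\in L(\aut)$ carried by a reduced edit string of weight $k$, whence $\dsid(u,v)\le k<d$, contradicting the minimality of $d$; so $\aut'_k$ has no accepting path for every $k<d$. Conversely, pick distinct $u,v\in L(\aut)$ with $\dsid(u,v)=d$; by Lemma~\ref{lem:didist}(3) there is a reduced edit string $h$ realizing this distance, and by statement (ii) of Lemma~\ref{lem:Adi:props} it is accepted by the eNFA of $\tsidp{d}$, necessarily at an error-counter-$d$ (hence final) state by statement (i). Pairing this transducer path with accepting $\aut$-paths for $u$ and $v$ yields an accepting path of $\aut'_d$. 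Hence the guard first becomes false at $k=d$ and the algorithm returns $d$.

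For (iii) the decisive observation is that the construction never reaches beyond layer $d$: after the initial build of $\aut'_1$ the loop performs $d-1$ calls to \texttt{Extend}, one per layer, and the $k$-th call examines only the triples at layer $k$ together with compatible pairs of $\aut$-transitions, costing $O(\sz{\aut}^2 r^2)$; summed over the $d$ layers this gives $O(\sz{\aut}^2 r^2 d)$. I expect this bookkeeping to be the main obstacle, rather than the distance characterisation of (ii), which is immediate from the earlier lemmas. Two pitfalls must be avoided: a naive guard test that re-traverses the whole of the growing $\aut'$ in each iteration would cost $\Theta(\sz{\aut}^2 r d^2)$, so one must maintain reachability incrementally and only test whether the freshly created layer-$k$ final states are reachable; and a product built through the generic bound $O(\sz{\tr}\sz{\aut}^2)$ of Remark~\ref{rem:ed}(5) would carry an extra factor of $r$, since $\tsidp{d}$ has $\Theta(r^3 d)$ transitions, so one must instead read off each transducer edge from the pair of $\aut$-labels inside \texttt{Extend}. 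Establishing that these two refinements keep the total work proportional to the final automaton, so that the time is governed by the computed distance $d$ rather than by $\db_\aut$, is the crux.
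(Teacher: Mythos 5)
Your proposal is correct and takes essentially the same approach as the paper, whose own proof consists precisely of the discussion preceding the theorem: the same layered product construction maintained by \texttt{Extend}, the same correctness argument via Lemma~\ref{lem:didist} and Lemma~\ref{lem:Adi:props} (no accepting path in $\aut'_k$ for $k<d$, and an accepting path at layer $d$, using that only layer-$k$ triples are final at step $k$), and the same charging of the work to the newly created layer so that the total is proportional to the size of $\aut'_d$. If anything, you are more explicit than the paper, which never spells out the incremental reachability maintenance needed to keep the repeated guard tests from costing $\Theta(\sz{\aut}^2r^2d^2)$, nor the need to generate product edges directly from pairs of $\aut$-transitions rather than via the generic $O(\sz{\tr}\sz{\aut}^2)$ construction---two refinements you correctly identify as necessary for the stated bound.
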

%

%%%%%%%%%%%%%%%%%%%%%%%%%%%%%%%%%%%%%%%%%%%%%%%%%%%%%%%%%
\section{Implementation and testing}\label{sec:implem}
We have implemented the main algorithm \texttt{DistBestInpAlter} of theorem~\ref{th:final}, as well as the intermediate versions
\pssi
\texttt{DistErrDetect, DistErrCorrect, DistFirstInpAlter,}
\pssn
using the FAdo library for automata
\cite{Fado}, which is well maintained and provides
several useful tools for manipulating automata.
Moreover, we have used some of the implementations
of I-LaSer \cite{ilaser} involving product constructions
between transducer and automaton objects of the FAdo
library. We note that an implementation in C++ of the
algorithm in \cite{Kon:2007} is discussed in
\cite{Daka:2011}, but the execution time is too slow
to be used for any meaningful comparisons with the algorithms
presented here. Our best algorithm can be executed online
at \cite{olaser}. The user can enter as input
an NFA in Grail or FAdo format, select the algorithm to execute,
and press the Submit button.

We have performed several
tests\footnote{All tests were performed on a machine with the
following specification.
  Make: Acer,
  CPU: AMD Athlon(tm) II X2 215,
  Clock speed: 2.70 GHz,
  Memory (RAM): 4.00 GB,
  Operating System: Windows 7 64-bit.}
for the correctness
of these algorithms, as well as two sets of tests for the time complexity, which confirm  the theoretical
result that \texttt{DistBestInpAlter}
is indeed the fastest algorithm. We note that, for the
other three algorithms, we have skipped the step of
computing the upper bound $\db_{\aut}$ on the edit distance,
as this step is the same for all these algorithms, thus
resulting in faster execution without affecting in any
essential way the performance comparisons.

The two sets of tests correspond to two sequences of
automata $(\aut_n)$ and $(\autb_n)$, shown in the next two figures, for which we used $n$ as the value of $\db_{\aut}$.
The first test set is such that the desired distance is equal to $n$, for each NFA $\aut_n$, that is, the distance grows
with $n$ and, in fact, it is a worst-case scenario where the
distance is equal to the number of states of the NFA.
The second test set is such that the desired distance is fixed, equal to 2,
for all $n$.
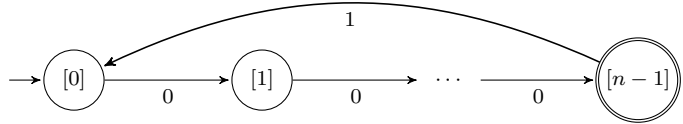
\begin{figure}[ht]
\centering
\begin{tikzpicture}[shorten >=1pt,node distance=2.5cm,
		on grid,>=stealth',initial text=,
		every state/.style={inner sep=2pt, minimum size=.8cm},
		 phantom/.style={draw=white},
to/.style={->,>=stealth',shorten >=1pt,semithick,font=\sffamily\footnotesize},
        font=\footnotesize]
	\node[state,initial] (q0) {$[0]$};
	\node[state] (p1) [right=of q0] {$[1]$};
	\node[state,phantom] (p2) [right=of p1] {$\cdots$};
	\node[state,accepting] (p3) [right=of p2] {$[n-1]$};

	\path[->] (q0) edge node [below] {$0$} (p1)
		 (p1) edge node [below] {$0$} (p2)
		 (p2) edge node [below] {$0$} (p3);
    \draw[to]
		(p3) to[bend right=25] node[midway,below] {$1$}
        (q0);
\end{tikzpicture}
\parbox{4.3in}{\caption{The automaton $\aut_n$ accepting the language $0^{n-1}(10^{n-1})^*$.}\label{fig:testA}}
\end{figure}

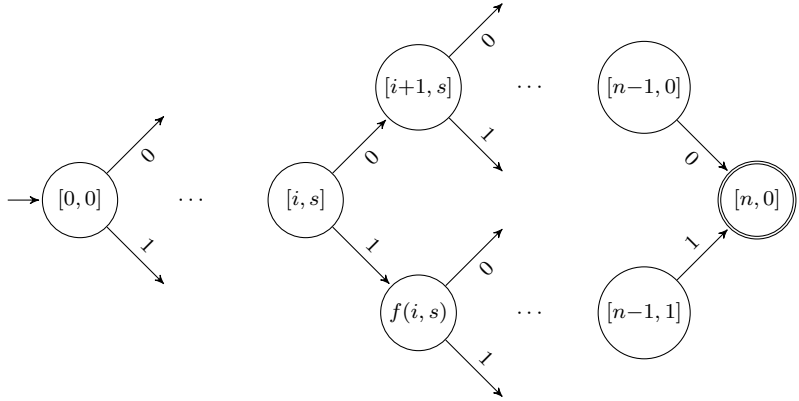
\begin{figure}[ht]
\centering
\begin{tikzpicture}[shorten >=1pt,node distance=1.5cm and 1.5cm,
		on grid,>=stealth',initial text=,
		every state/.style={inner sep=2pt, minimum size=1cm},
		phantom/.style={draw=white},
		to/.style={->,>=stealth',shorten >=1pt,semithick,font=\sffamily\footnotesize},
        font=\footnotesize]
	\node[state,initial] (q0) {$[0,0]$};
	\node[state,phantom] (q1) [above right=of q0] { };
	\node[state,phantom] (q2) [below right=of q0] { };
	\node[state,phantom] (q3) [right=of q0] {$\cdots$};

	\node[state] (p0) [right=of q3] {$[i,s]$};
	\node[state] (p1) [above right=of p0] {$[i{+}1,s]$};
	\node[state,phantom] (p11) [above right=of p1] { };
	\node[state,phantom] (p12) [below right=of p1] { };
	\node[state] (p2) [below right=of p0] {$f(i,s)$};
	\node[state,phantom] (p21) [above right=of p2] { };
	\node[state,phantom] (p22) [below right=of p2] { };
	\node[state,phantom] (p3) [right=of p1] {$\cdots$};
	\node[state,phantom] (p4) [right=of p2] {$\cdots$};

	\node[state] (r1) [right=of p3] {$[n{-}1,0]$};
	\node[state] (r2) [right=of p4] {$[n{-}1,1]$};
	\node[state,accepting] (r3) [below right=of r1] {$[n,0]$};

	\path[->]
		(q0) edge node [below,sloped] {$0$} (q1)
			 edge node [above,sloped] {$1$} (q2)
		(p0) edge node [below,sloped] {$0$} (p1)
			 edge node [above,sloped] {$1$} (p2)
		(p1) edge node [below,sloped] {$0$} (p11)
			 edge node [above,sloped] {$1$} (p12)
		(p2) edge node [below,sloped] {$0$} (p21)
			 edge node [above,sloped] {$1$} (p22)
		(r1) edge node [below,sloped] {$0$} (r3)
		(r2) edge node [above,sloped] {$1$} (r3);

\end{tikzpicture}
\parbox{4.3in}{\caption{The automaton $\autb_n$ with $n^2+n+1$ states, where
$f(i,s)=[i+1,(s+i+1)\%(n+1)]$. The states are $[n,0]$
and $[i,s]$, with $0\le i\le n-1$ and $0\le s\le n$.
This automaton accepts the Levenshtein code consisting of all binary words $b_1\cdots b_n$ of length $n$ such that $(\sum_{i=1}^ni\cdot b_i)\%(n+1)=0$, where `\%' is the integer division remainder operation. This code has edit distance equal to 2. On the other hand, its distance for insertion/deletion errors only is 3, so it is error-correcting for the 1-insertion/deletion per word channel.}\label{fig:testB}}
\end{figure}
The next table shows the actual running times of the
four algorithms on the NFAs
$\aut_4,\ldots,\aut_8,\aut_{13}, \aut_{21}, \aut_{31}$.
The number in parentheses next to each $\aut_i$ indicates the number of states in $\aut_i$.

\pmsi
\begin{tabular}{|c|c|c|c|c|}\hline
NFA & ErrDetection & ErrCorrection  &  FirstInpAlter
& BestInpAlter\\  \hline
%$\aut_4\>(4)$ & 0.47s & 0.07s & 0.03s & 0.006s  \\  \hline
$\aut_5\>(5)$ &  3.94s & 0.35s & 0.08s & 0.008s  \\  \hline
$\aut_6\>(6)$ & 19.20s & 0.48s & 0.11s & 0.010s  \\  \hline
$\aut_7\>(7)$ & 107.35s & 2.54s & 0.18s & 0.013s  \\  \hline
$\aut_8\>(8)$ & 442.01s & 4.03s & 0.33s & 0.016s  \\  \hline
$\aut_{13}\>(13)$ & $> 5 \hbox{ hours}$ & 144.75s & 1.31s & 0.020s  \\  \hline
$\aut_{21}\>(21)$ & $> 5 \hbox{ hours}$ & 12475.27s & 10.21s & 0.029s  \\  \hline
$\aut_{31}\>(31)$ & $> 5 \hbox{ hours}$ & $> 5 \hbox{ hours}$ & 46.28s & 0.109s  \\  \hline
\end{tabular}
\pbsn
The next table shows the actual running times of the
four algorithms on the NFAs $\autb_3,\ldots,\autb_8$.
The number in parentheses next to each $\autb_i$ indicates the number of states in $\autb_i$.
\pmsi
\begin{tabular}{|c|c|c|c|c|}\hline
NFA & ErrDetection & ErrCorrection  &  FirstInpAlter
& BestInpAlter\\  \hline
$\autb_3\>(13)$ & 0.889s & 0.164s & 0.098s & 0.027s  \\  \hline
$\autb_4\>(21)$ & 212.32s & 7.06s & 0.655s & 0.039s  \\  \hline
$\autb_5\>(31)$ & $>5 \hbox{ hours}$ & 72.25s & 4.63s & 0.097s  \\  \hline
$\autb_6\>(43)$ & $>5 \hbox{ hours}$ & 3806.74s & 40.79s & 0.234s  \\  \hline
$\autb_7\>(57)$ & $>5 \hbox{ hours}$ & $>5 \hbox{ hours}$ & 375.17s & 0.735s  \\  \hline
$\autb_8\>(73)$ & $>5 \hbox{ hours}$ & $>5 \hbox{ hours}$ & 2070.21s & 1.919s  \\  \hline
\end{tabular}
\pbsn
Let $d$ be the computed edit distance in the above test
sets.
The best algorithm has about the same performance in both test cases, even though $d$ is a parameter in its time complexity $O(n^2 d)$.
A possible explanation is that the NFA $\autb_i$ has more edges than those of $\aut_j$, when both $\autb_i$ and $\aut_j$ have the same number of states.
A reason why the intermediate algorithms perform a lot better on  $\autb_i$ than on $\aut_j$ with the same number of states is that the value of the edit distance upper bound
in $\autb_i$ is smaller than that in $\aut_j$.
\pbsn
A further improvement to our algorithms, to be implemented, is that one can
remove from Fig.~\ref{fig:inaltertrans} all the diagonal transitions
from a state $[i,a]$ to a state $[i+1]$. This is because,
for any edit string of the form
\[
h=e_1\cdots e_r\>(a/\e)(a_1/\e)\cdots(a_d/\e)(\sigma/\tau)\>h_1
\]
accepted by $\etr_k$, where $\tau\notin\{a,\sigma\}$ and
the $e_j$'s are non-errors, the
automaton $\etr_k$ also accepts
\[
g=e_1\cdots e_r\>(a/\tau)(a_1/\e)\cdots(a_d/\e)(\sigma/\e)\>g_1
\]
such that $\inp(g)=\inp(h)$, $\out(g)=\out(h)$, and
$\weight{g}=\weight{h}$. Moreover $\etr_k$ accepts
$g$ using none of the diagonal transitions that are
to be removed as specified above.
A similar observation holds if we replace in $h$ the edit
operation $(\sigma/\tau)$ shown in $h$ with $(\e/\sigma)$,
where $\sigma\not=a$.
Of course this improvement does not affect in any significant
way the performance comparisons among the four algorithms.

%%%%%%%%%%%%%%%%%%%%%%%%%%%%%%%%%%%%%%%%%%%%%%%%%%%%%%%%%
\section{Conclusion}\label{sec:last}
This paper represents a significant improvement in  the
time complexity of computing the edit distance of a given regular language.
As discussed in \cite{Kon:2007}, this problem is related to
the inherent capability of a language to detect substitution,
insertion, and deletion errors. The method based on using the input-altering transducer $\tsid$
seems to adapt to other types of errors as well.
For example, one can construct a similar input-altering transducer for insertion/deletion only errors. It seems
promising to investigate the problem when the errors
have different costs (in the current setting, the cost of
each error is 1).

The present contribution stemmed from the
question of whether the error-detection property for
the channel  $\chsid(k)$ can be described by an
input-altering transducer. The more general question
of whether an input-preserving transducer property of
interest can be described by an input-altering transducer is important to investigate, as this would lead to more
efficient algorithms for deciding the property
satisfaction problem (whether, given a regular language
and a transducer property, the language satisfies the property).

%%%%%%%%%%%%%%%%%%%%%%%%%%%%%%%%%%%%%%%%%%%%%%%%%%%%%%%%%%%%%%%%%%%%%%%%%%%%%%%%
%%%%%%%%%%%%%%%%%%%%%%%%%%%%%%%%%%%%%%%%%%%%%%%%%%%%%%%%%%%%%%%%%%%%%%%%%%%%%%%%

\bibliographystyle{abbrv}
\bibliography{refs}

\end{document}